\newcommand*\patchAmsMathEnvironmentForLineno[1]{%
  \expandafter\let\csname old#1\expandafter\endcsname\csname #1\endcsname
  \expandafter\let\csname oldend#1\expandafter\endcsname\csname end#1\endcsname
  \renewenvironment{#1}%
     {\linenomath\csname old#1\endcsname}%
     {\csname oldend#1\endcsname\endlinenomath}}%
\newcommand*\patchBothAmsMathEnvironmentsForLineno[1]{%
  \patchAmsMathEnvironmentForLineno{#1}%
  \patchAmsMathEnvironmentForLineno{#1*}}%
\newenvironment{process}[1][htb]{%
    \renewcommand{\ALG@name}{Process}
   \begin{algorithm}[#1]%
  }{\end{algorithm}}
\DeclareRobustCommand{\rvdots}{%
  \vbox{
    \baselineskip4\p@\lineskiplimit\z@
    \kern-\p@
    \hbox{.}\hbox{.}\hbox{.}
  }}
\def\etal.{et\penalty50\ al.}
\theoremstyle{plain}
\newtheorem{theorem}{Theorem}[section]
\newtheorem{lemma}[theorem]{Lemma}
\newtheorem{corollary}[theorem]{Corollary}
\theoremstyle{definition}
\newtheorem{definition}{Definition}[section]
\theoremstyle{remark}
\theoremstyle{plain}
\newtheorem*{theorem*}{Theorem}
\DeclareMathOperator*{\E}{\mathbb{E}}
\title{Random-Order Interval Selection}
\author{Allan Borodin \\ \textsf{bor@cs.toronto.edu} \and Christodoulos Karavasilis \\ \textsf{ckar@cs.toronto.edu}}
\date{}
\begin{document}

\maketitle
\begin{abstract}
In the problem of online unweighted interval selection, the objective is to maximize the number of non-conflicting intervals accepted by the algorithm. In the conventional online model of irrevocable decisions, there is an $\Omega(n)$ lower bound on the competitive ratio, even for randomized algorithms Bachmann et al. \cite{bachmann2013online}. In a line of work that allows for revocable acceptances, Faigle and Nawijn \cite{faigle1995note} gave a greedy $1$-competitive (i.e. optimal) algorithm in the real-time model, where intervals arrive in order of non-decreasing starting times. The natural extension of their algorithm in the adversarial (any-order) model is $2k$-competitive as shown in Borodin and Karavasilis \cite{borodin2023any}, when there are at most $k$ different interval lengths, and that is optimal for all deterministic, and memoryless randomized algorithms. We study the interval selection problem in the random-order model, where the adversary chooses the instance, but the online sequence is a uniformly random permutation of the items. We consider the same algorithm that is optimal in the cases of the real-time and any-order models, and give an upper bound of $2.5$ on the competitive ratio under random-order arrivals. We also contrast this to the best known $6$-competitive randomized algorithm in the adversarial input model by Emek et al. \cite{emek2016space}.

Our interest in the random order model leads us to 
 initiate a study of utilizing random-order arrivals to extract random bits with the goal of derandomizing algorithms. Besides producing simple algorithms, simulating random bits through random arrivals enhances our understanding of the comparative strength of randomized algorithms (with adversarial input sequence) and deterministic algorithms in the random order model. We consider three $1$-bit random extraction processes. Our third extraction process returns a bit with a worst-case bias of $2 - \sqrt{2} \approx 0.585$ and operates under the mild assumption that there exist at least two distinct items in the input.
We motivate the applicability of this process by using it to simulate a number of barely random algorithms for weighted interval selection (single-length arbitrary weights, as well as C-benevolent instances), the general knapsack problem, string guessing, minimum makespan scheduling, and job throughput scheduling.
\end{abstract}

\section{Introduction}
In the problem of interval scheduling on a single machine, there is a set of intervals on the real line, each with a fixed starting time and end time, and we must choose a subset of non-conflicting intervals. In the unweighted setting, the goal is to maximize the cardinality of the subset. In terms of the objective function, this is equivalent to finding a maximum independent set of an interval graph. In weighted variations, each interval is associated with a weight, and we aim to maximize the total weight of the solution. In the traditional online version of the problem, intervals arrive one at a time, and the algorithm must either permanently accept an interval, or forever discard it. Results in this worst case adversarial setting are quite negative. Even in the real-time model where intervals arrive in order of increasing starting times, there is an $\Omega(n)$ lower bound for randomized algorithms \cite{bachmann2013online}. Following existing work on interval scheduling, we consider a model where any new interval can be accepted, displacing any conflicting intervals currently in the solution. Similar to intervals that are rejected upon arrival, displaced intervals  can never be taken again. 
 Different types of revocable decisions are problem-specific, and appear under various names, such as \textit{preemption}, \textit{replacement}, \textit{free disposal}, and \textit{recourse}. Our use of revocable means revocable acceptances (also referred to as late rejections) and we view it as the simplest and easiest (to implement) form of preemption and recourse.  Other examples of problems that have been studied in the revocable acceptances model include the knapsack problem Iwama et al. \cite{iwama2002removable}, submodular maximization 
Buchbinder at al. \cite{buchbinder2019online}, weighted matching 
Feldman et al. \cite{feldman2009online}, maximum coverage Rawitz and Ros{\'e}n  \cite{rawitz2021online}, and other graph problems Boyar et al. \cite{boyar2022relaxing}. It is worth noting that algorithms in these models are particularly relevant in the context of size of budget constraints and are also relevant when online algorithms are used to construct offline solutions. \\\\

In the adversarial input model of online algorithms with revoking, the optimal deterministic algorithm for unweighted interval selection is $2k$-competitive \cite{borodin2023any}, where $k$ is the number of different interval lengths. We study this problem under random-order arrivals, a model used for beyond worst-case analysis that also captures stochastic i.i.d. settings (see Gupta \& Singla \cite{DBLP:books/cu/20/Gupta020}). While there are many instances where random arrivals help, there are problems where the competitive ratio is not significantly improved (e.g. Steiner trees where a greedy algorithm is $O(\log n)$-competitive in the worst-case, and there are $\Omega(\log n)$ bounds for both adversarial and random-order arrivals \cite{DBLP:books/cu/20/Gupta020}). We show that the simple greedy algorithm that is optimal $2k$-competitive in the adversarial case, is $2.5$-competitive in the random-order model, removing the dependence on $k$. In this model, the adversary chooses the input items, but the online sequence is a uniformly random permutation of the items. \\\\

 Finally, we use the application of interval scheduling  as motivation to begin to understand a more general issue in online algorithms, namely to  understand the  power of randomized algorithms with adversarial arrival  order compared to deterministic algorithms with random arrivals.
In this regard, we are interested to what extent can we extract random bits from the randomness in the arrival order. Specifically, we show how  to take advantage of the randomness in the arrival order to extract a random bit with bounded bias.\\\\

Interval scheduling is a well motivated and well studied problem in both the offline and online settings.  Some examples of applications related to interval scheduling are routing \cite{plotkin1995competitive}, computer wiring \cite{gupta1979optimal}, project selections during space missions \cite{hall1994maximizing}, and satellite photography \cite{gabrel1995scheduling}. We refer the reader to the surveys by Kolen et al. \cite{kolen2007interval} and Kovalyov et al. \cite{kovalyov2007fixed} for a more detailed discussion on the applications of interval scheduling.\\\\

\textbf{Related Work.} Lipton and Tomkins \cite{lipton1994online} introduced the problem of online interval scheduling. They consider the real-time setting, proportional weights, and do not allow for displacement of intervals in the solution. They give a randomized algorithm that is $O((\log\Delta)^{1+\epsilon})$-competitive, where $\Delta$ is the ratio of the longest to shortest interval. Their paper also introduced the classify and randomly select paradigm.  In the real-time unweighted setting, Faigle and Nawijn \cite{faigle1995note} consider a simple greedy $1$-competitive (optimal) deterministic algorithm with revoking. Without revoking, there is an $\Omega(n)$ lower bound both for deterministic and randomized \cite{bachmann2013online} algorithms. Woeginger \cite{woeginger1994line} considers a real-time, weighted variation of the problem with revoking, and shows that no deterministic algorithm can be constant competitive for general weights. Canetti and Irani \cite{canetti1995bounding} extend this impossibility to randomized algorithms with revoking. When an interval's weight is a function of its length, Woeginger gives an optimal $4$-competitive deterministic algorithm for special classes of weight functions. Randomized algorithms were considered for these special classes of functions \cite{seiden1998randomized,epstein2008improved}, with Fung et al. \cite{fung2014improved} currently having the best known upper bound of $2$.\\\\

In the adversarial input model, or \textit{any-order} arrivals, Bachmann et al. \cite{bachmann2013online} show a lower bound of $\Omega(n)$ for randomized algorithms in the unweighted setting without revoking. Borodin and Karavasilis \cite{borodin2023any} consider the unweighted problem with revoking, and give an optimal $2k$-competitive deterministic algorithm, where $k$ is the number of different interval lengths. This algorithm is a natural extension of the algorithm by Faigle and Nawijn \cite{faigle1995note} for any-order arrivals. Following Adler and Azar \cite{adler2003beating} who gave the first constant-competitive randomized algorithm for disjoint path allocation, Emek et al. \cite{emek2016space} give a $6$-competitive randomized algorithm for unweighted interval selection. For the case of proportional weights with revoking, Garay et al. \cite{garay1997efficient} give an optimal $(2+\sqrt5)\approx 4.23$-competitive deterministic algorithm for the problem of call control on the line, which also applies to any-order interval selection. The $2$-competitive randomized algorithm by Fung et al. \cite{fung2014improved} for the case of real-time, single-length, arbitrary weights, also applies to the any-order case.\\\\
In the random-order setting, Im and Wang \cite{im2011secretary} consider the interval scheduling secretary problem, where weighted jobs have to be processed within some interval, not necessarily continuously. They give a $O(\log D)$-competitive randomized algorithm, where $D$ is the maximum interval length of any job. More relevant to our setting, Borodin and Karavasilis \cite{borodin2023any} consider single-length unweighted interval selection with random arrivals, and show that the only deterministic memoryless algorithm\footnote{We are aware of an unpublished manuscript that proves that the one-way algorithm achieves a competitive ratio of $\approx 1.2707$ for instances that are simple chains. To the best of our knowledge, it is not known how to extend this ratio to apply to arbitrary single length instances.} that may be better than $2$-competitive, is a \textit{one-way} algorithm that replaces intervals in the same direction. Independent of our work, Garg et al. \cite{garg2024random} consider interval scheduling and maximum independent set of hyperrectangles under random arrivals. They do not allow for revoking of accepted intervals, and give a non-greedy algorithm that is \textit{strongly} (a form of high probability) $O(\log n\cdot \log\log n)$-competitive for interval selection. We note that their algorithm requires knowledge of $n$, the size of the input instance. Furthermore, they show that no algorithm that is not provided $n$ can be strongly $O(n^{1-\epsilon})$-competitive, for all $\epsilon > 0$. They also show that any greedy algorithm (i.e., one that always accepts an interval when possible) is at best $\frac{\sqrt{n}}{2}$-competitive. It follows that without knowledge of $n$, an algorithm can be forced to be greedy and hence at best $\sqrt{n}$- competitive. \\\\
Also relevant to our work is the random order algorithms (without revoking)  for the knapsack problem which is known as the knapsack secretary problem.  Babaioff et al. \cite{BabaioffIKK07} initiated the study of the knapsack secretary problem and developed a $10e$ competitive algorithm.  This was followed by the $8.06$ competitive algorithm of  Kesselheinm et al. \cite{KesselheimRTV18} and the current best $6.65$ competitive algorithm of  Albers et al. \cite{AlbersL21}. All these knapsack secretary algorithms are randomized and require knowledge of the number  of input items. The latter two papers extend their results (with somewhat worst ratios) to GAP, the generalized  assignment problem. In contrast, we will derive a much simpler deterministic online  algorithm for the knapsack problem using revoking instead of randomness. \\\\

\textbf{Our Results.}
We consider the  simple deterministic online algorithm in  \cite{borodin2023any} that is optimally $2k$-competitive in the adversarial any-order setting, and extends the $1$-competitive algorithm of \cite{faigle1995note} for the real-time setting. We analyze that algorithm under uniformly random arrivals, and we give an upper bound of $2.5$ on the competitive ratio. We use a charging argument motivated by \cite{borodin2023any} and bound the competitive ratio by the expected amount of maximum charge on any interval. We note that in contrast to the secretary algorithms,  our algorithm does not require knowledge of $n$, the length of the input sequence. We also give a lower bound of $\frac{12}{11}$ on the competitive ratio of all deterministic algorithms with revoking under random arrivals (appendix \ref{app:B}). This bound is sufficient to separate the random-order model from the real-time model, where $1$-competitiveness is attainable by a deterministic algorithm.\\\\

In a more general direction, we initiate a study of utilizing random-order arrivals to extract random bits with the goal of derandomizing algorithms. 
In particular, we consider 1-bit  barely random algorithms that randomly choose between two deterministic online algorithms $ALG_1$ and $ALG_2$  with the property that for every input sequence $\sigma$, the output of $ALG_1(\sigma) + ALG_2(\sigma) \geq OPT(\sigma) / c$ so that at least one of these two algorithms will provide a ``good''  approximation to an optimal solution on $\sigma$.  Furthermore, these deterministic algorithms usually perform optimally or near optimally when $\sigma$ is a sequence of {\it identical} items. We first present a simple process that extracts a bit with a worst case bias of $\frac{2}{3}$ when there are at least two distinct item ``types''. We then study another simple process that extracts an unbiased bit if all input items are distinct, and then a combined process that will extract a random bit with a worst case bias of $2-\sqrt{2}\approx 0.585$ under the assumption that not all items are identical. We use this last process to derandomize a number of $1$-bit barely random algorithms. In particular, we derandomize the string guessing game of \cite{BockenhauerHKKSS14}, the  Fung et al. \cite{fung2014improved} algorithms for single-length interval selection with arbitrary weights, as well as C-benevolent instances (in the terminology of Woeginger \cite{woeginger1994line}), the Han et al. \cite{han2015randomized} algorithm for the knapsack problem, the scheduling algorithm of Albers \cite{Albers02} to minimize makespan, and the unweighted throughput scheduling algorithms of Kalyanasundaram and Pruhs \cite{KalyanasundaramP03} and Chrobak et al. \cite{chrobak2007online}.  We define the {\it real time with random-order model} to mean that the set of release times in the set of input items are fixed and a random permutation is the applied to the remainder of each input item. To the best of our knowledge, with the exception of the knapsack problem, none of the aforementioned applications has been studied in the random order model. Hence our de-randomization of these 1-bit barely random algorithms provides a ``proof of concept'', constructively showing that there exists a deterministic constant competitive algorithm in the random order model.\\

\textit{Organization of the paper.} Section 2 includes definitions and a description of how the mapping from optimal intervals to intervals accepted by the algorithm is defined. We also show how the competitive ratio is bounded. Section 3 contains the main analysis of the algorithm in the random-order model. In Section 3.1 we deal with the case of two interval lengths ($k=2$). We explore the dynamics of redefining the mapping because of the displacement of intervals (revoking), and this analysis is later used to show the general case for any $k>2$ in Section 3.2. Section 4 presents three different processes to extract random bits using random-order arrivals. In Section 4.1, we apply the randomness extraction processes to derandomize a number of $1$-bit barely -random-bit algorithms. In Section 4.2 we briefly present some thought on extracting more than 1 random bit. Finally, in Section 5 we conclude with a few of the many open problems.

\section{Preliminaries}
The model consists of intervals arriving on the real line. An interval $I_{i}$ is specified by a starting point $s_{i}$, and an end point $f_{i}$, with $s_{i} < f_{i}$. It occupies space $[s_{i},f_{i})$ on the line, and the conventional notions of intersection, disjointness, and containment apply. There are two main ways intervals can conflict, and they are shown in figure \ref{fig:conflicts}. One type of conflict is  a \textit{partial conflict}, and the other type is \textit{inclusion}, or \textit{containment}. In the case of containment, we say that the smaller intervals are \textit{subsumed} by the larger one. We use $k$ to denote the number of different interval lengths of an instance. An instance with $k$ different lengths, can have a \textit{nesting depth} of at most $k-1$.\\\\
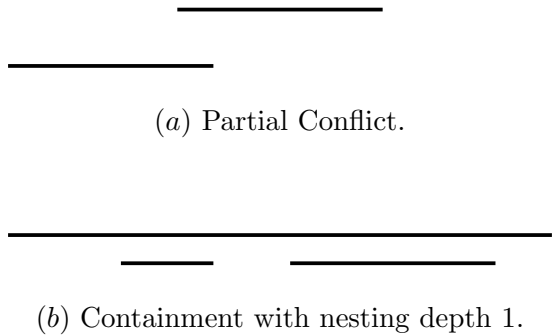
\begin{figure}[H]
	\centering
	
	\begin{tikzpicture}[scale=0.75]

	\node[draw=none] (I1a) at (-10,0) {$ $};
	\node[draw=none] (I1b) at (-6,0) {$ $};
	\draw[line width=0.5mm] (I1a) -- (I1b);


	\node[draw=none] (I2a) at (-7,1) {$ $};
	\node[draw=none] (I2b) at (-3,1) {$ $};
	\draw[line width=0.5mm] (I2a) -- (I2b);
	
	\node at (-5,-1) {$(a) \text{ Partial Conflict.}$};
	
	\node[draw=none] (I3a) at (-8,-3.5) {$ $};
	\node[draw=none] (I3b) at (-6,-3.5) {$ $};
	\draw[line width=0.5mm] (I3a) -- (I3b);

 \node[draw=none] (Ima) at (-5,-3.5) {$ $};
	\node[draw=none] (Imb) at (-1,-3.5) {$ $};
	\draw[line width=0.5mm] (Ima) -- (Imb);


	\node[draw=none] (I4a) at (-10,-3) {$ $};
	\node[draw=none] (I4b) at (-0,-3) {$ $};
	\draw[line width=0.5mm] (I4a) -- (I4b);
	
	\node at (-5,-4.5) {$(b) \text{ Containment with nesting depth 1}.$};

	\end{tikzpicture} 
	\caption{Types of conflicts.}\label{fig:conflicts}
\end{figure}
Let $OPT$ denote the size of an optimal solution, and $ALG$ the size of the algorithm's solution. We will also use $OPT$ and $ALG$ to refer respectively to an optimal solution, and the solution returned by the algorithm. The meaning should always be clear from context. We use the notion of competitive ratio to measure the performance of an online algorithm. Given an algorithm $A$ (creating a solution $ALG$), we consider the strict competitive ratio of $A$ : $CR(A) = \underset{\mathcal{I}}{\max}\frac{OPT}{\E\left[ALG\right]}$, where the expectation is over all the permutations of the input instance, and the maximum is over all input instances.\\\\ 
In our proofs, we make use of a charging argument in terms of direct and transfer charging as in \cite{borodin2023any}, although our charging argument is more significantly  involved in the random-order model. We will require a novel application of Wald's  Inequality and need to consider $OPT$ solutions that depend on the input sequence. The proof will also require a careful composition of ``base instances'' in the final analysis of arbitrary input sequences. We will now describe how the charging is done. Given an instance (set of intervals) $\mathcal{I}$ and an interval arrival sequence $\sigma$, we choose an optimal solution $OPT^{\mathcal{I}}_{\sigma}$, and define a mapping $\mathcal{F}^{\mathcal{I}}_{\sigma}: OPT^{\mathcal{I}}_{\sigma}\rightarrow ALG^{\mathcal{I}}_{\sigma}$ that shows how the intervals from an optimal solution are charged to intervals taken by the algorithm. The mapping $\mathcal{F}^{\mathcal{I}}_{\sigma}$ can be viewed as being formed and redefined throughout the execution of the algorithm as follows: On the arrival of interval $I \in OPT^{\mathcal{I}}_{\sigma}$, if $I$ is taken by the algorithm, it is mapped onto itself. If $I$ is rejected because it conflicts with some intervals taken by the algorithm, it is arbitrarily mapped to one of those conflicting intervals. Whenever an interval $I'$ is taken by replacing an existing interval $I''$, all optimal intervals mapped to $I''$ up to that point, will then be mapped to $I'$. These two first cases where optimal intervals are charged upon arrival, are instances of \textit{direct charging}. Whenever an interval is replaced by another, an instance of \textit{transfer charging} occurs to the new interval. Notice that in the end, every interval $I \in OPT^{\mathcal{I}}_{\sigma}$ is mapped to exactly one interval taken by the algorithm. We note that being able to choose a different $OPT$ for a given sequence $\sigma$, provides flexibility and facilitates our proofs. This may be important in tackling other problems in the random-order model, especially when revoking is allowed.\\\\
Given the mapping $\mathcal{F}^{\mathcal{I}}_{\sigma}$, let $\Phi : ALG\rightarrow \mathbb{Z}_{\ge 0}$ denote the charging function, which shows, at any time during the execution, the total amount of charge to any interval currently in the online algorithm's solution. That is, $\Phi (I)= |\{I' \in OPT: \mathcal{F}(I')=I\}|$. We can also express the amount of charge as $\Phi(I) = TC(I) + DC(I)$, where $TC(I)$ denotes the total amount of transfer charge to $I$ at the time it was taken by the algorithm, and $DC(I)$ denotes the total amount of direct charge to $I$.\\\\
Notice how at the end of the execution, $\sum_{I\in ALG} \Phi(I) = OPT$. We can now bound the competitive ratio of an algorithm for any instance as follows:
\begin{align*}
    \frac{OPT}{\E\left[ALG\right]}&=\frac{\E\left[\sum\limits_{1\leq i \leq ALG} \Phi(I_i)\right]}{\E\left[ALG\right]}\\\\
    &\leq \frac{\E\left[ALG\right]\max\limits_{I}\{\E\left[\Phi(I) \;|\; I \in ALG\right]\}}{\E\left[ALG\right]}\\\\
     &=\max\limits_{I}\{\E\left[\Phi(I) \;|\; I \in ALG\right]\}\\
\end{align*}
The first equality is because the sum $\Phi(I_1)+...+\Phi(I_{ALG})$ is always equal to OPT, which is a constant determined by the instance $\mathcal{I}$, and does not depend on the random arrival sequence. The inequality holds by applying Wald's inequality (as given in Young \cite{young2000k}, lemma 4.1). It follows that it suffices to bound the expected charge on every interval in $ALG$.\\
\begin{definition}[Predecessor trace]
    Let $I$ be an interval in the algorithm's final solution. The predecessor trace of $I$ is the maximal list of intervals $(P_{1},P_{2},...,P_{k}=I)$ such that $P_{i}$ was at some point accepted by the algorithm, but was later replaced by $P_{i+1}$.
\end{definition}
A predecessor trace is analogous to Woeginger's \cite{woeginger1994line} predecessor chain in the real-time model.

\section{Main Analysis for the Random-Order Model}

In this section we analyze the performance of Algorithm \ref{alg:replace-sub}. This algorithm is \textit{greedy}, in the sense that when an arriving interval does not conflict with anything, it is always accepted by the algorithm. If there are conflicts, a new interval is only accepted if it is entirely subsumed by an interval currently in the solution, which in turn gets replaced. Notice that an interval taken (maybe temporarily) by this algorithm can be directly charged by at most two optimal intervals. This is because any interval can partially conflict with at most two intervals from an optimal solution. This fact is also relevant for single-length instances ($k=1$), where no interval is replaced by this algorithm. In that case we have $TC(I) = 0$ and $DC(I)\leq 2$ for every interval $I$, giving us an upper bound of $2$ on the competitive ratio. A lower bound of $\frac{2n}{n+2}$ is given in figure \ref{fig:rom_1_lb} ($ALG = 1$ w.p. $\frac{n-2}{n}$, $OPT = 2$).
\begin{algorithm}
\caption{}\label{alg:replace-sub}
\begin{algorithmic}

\State On the arrival of $I$:
\State $I_{s} \gets $ Set of intervals currently in the solution conflicting with $I$
\If{$I_s = \emptyset$}
    \State Take $I$
    \State return
\EndIf
\For{$I' \in I_{s}$}
\If{$I \subset I'$}
    \State Take $I$ and discard $I'$
    \State return
\EndIf
\EndFor
\State Discard $I$

\end{algorithmic}
\end{algorithm}
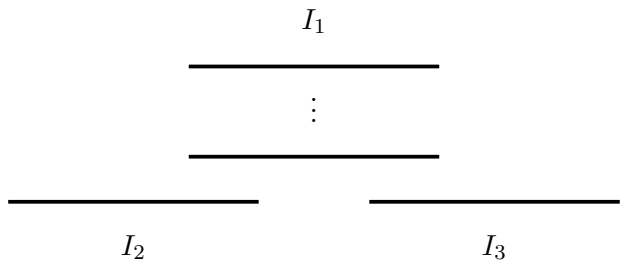
\begin{figure}[H]
	\centering
	
	\begin{tikzpicture}[scale=0.5]

	\node at (-3,1) {$I_{1}$};
	\node[draw=none] (I1a) at (-6,0) {$ $};
	\node[draw=none] (I1b) at (0,0) {$ $};
	\draw[line width=0.5mm] (I1a) -- (I1b);
	
	\node[draw=none] (d1) at (-3,-1) {$\rvdots$};
	
	\node[draw=none] (I11a) at (-6,-2) {$ $};
	\node[draw=none] (I11b) at (0,-2) {$ $};
	\draw[line width=0.5mm] (I11a) -- (I11b);

	\node[draw=none] (I2a) at (-10,-3) {$ $};
	\node[draw=none] (I2b) at (-4,-3) {$ $};
	\node at (-7,-4) {$I_{2}$};
	\draw[line width=0.5mm] (I2a) -- (I2b);

	\node[draw=none] (I3a) at (-2,-3) {$ $};
	\node[draw=none] (I3b) at (4,-3) {$ $};
	\node at (1,-4) {$I_{3}$};
	\draw[line width=0.5mm] (I3a) -- (I3b);
	
	\node[draw=none] (I2c) at (-6,-3) {$ $};
	\node[draw=none] (I3c) at (0,-3) {$ $};
	
	(I2c) -- (I2b) node [black,midway,yshift=0.5 cm]
	
	(I3a) -- (I3c) node [black,midway,yshift=0.5 cm]

	\end{tikzpicture} 
	\caption{Instance where Algorithm \ref{alg:replace-sub} is $\frac{2n}{n+2}$-competitive.}\label{fig:rom_1_lb}
\end{figure}
We will now study the case of only two interval lengths. The results of this section will later be used to show the result for $k>2$. 
\subsection{Case of $k=2$}
We first focus on a \textit{base instance} that showcases the dynamics of transfer charging. Note that in this case, any predecessor trace is of length at most two. Consider an instance with two different lengths as shown in figure \ref{fig:base-xfer}. Let $L,R,M,S$ denote the sets of corresponding intervals. The set $S$ of small intervals is entirely contained in the large intervals of $M$, and we make no assumptions about the structure of $S$. In fact, intervals in $S$ are also allowed to partially\footnote{W.l.o.g. no interval in $I_{s}\in S$ is entirely contained in an interval in $L\cup R$. If that was the case, $I_{s}$ would be considered optimal and at least one of $L,R$ would be empty.} conflict with intervals in $L\cup R$. An optimal solution consists of intervals $I_{L}\in L$, $I_{R}\in R$, and some intervals $I_{s} \subseteq S$. For the purposes of charging, we will be choosing the optimal solution that contains the latest arriving $I_{L}\in L$  and $I_{R}\in R$. The intervals in $L$ and $R$ are depicted as small intervals, but in reality they could be either small or large. We also note that intervals that are depicted as copies do not have to perfectly coincide. 

\begin{figure}[H]
	\centering
	\begin{tikzpicture}[scale=0.6]
	\filldraw[color=red!10, very thick](0,-2) ellipse (4 and 1);
    \node[color=red,thick] at (3,-3) {$S$};

	\node at (0,3) {$M$};
	\node[draw=none] (I1a) at (-8,0) {$ $};
	\node[draw=none] (I1b) at (8,0) {$ $};
	\draw[line width=0.5mm] (I1a) -- (I1b);

 \node[draw=none] (I1a1) at (-8,0.5) {$ $};
	\node[draw=none] (I1b1) at (8,0.5) {$ $};
	\draw[dotted,line width=0.5mm] (I1a1) -- (I1b1);

 \node[draw=none] (I1a2) at (-8,1) {$ $};
	\node[draw=none] (I1b2) at (8,1) {$ $};
	\draw[dotted,line width=0.5mm] (I1a2) -- (I1b2);

 \node[draw=none] (d1) at (0,2) {$\vdots$};
	
	\node[draw=none] (I2a) at (-2.5,-2) {$ $};
	\node[draw=none] (I2b) at (-0.5,-2) {$ $};
	\draw[line width=0.5mm] (I2a) -- (I2b);
	
	\node[draw=none] (I3a) at (-1,-1.5) {$ $};
	\node[draw=none] (I3b) at (1,-1.5) {$ $};
	\draw[line width=0.5mm] (I3a) -- (I3b);
	
	\node[draw=none] (I4a) at (0.5,-2) {$ $};
	\node[draw=none] (I4b) at (2.5,-2) {$ $};
	\draw[line width=0.5mm] (I4a) -- (I4b);
	
	\node[draw=none] (d1) at (-3,-2) {$\dots$};
    \node[draw=none] (d1) at (3,-2) {$\dots$};

	\node[draw=none] (ILa) at (-9.5,-0.5) {$ $};
	\node[draw=none] (ILb) at (-7.5,-0.5) {$ $};
	\node at (-8.5,0) {$L$};
	\draw[line width=0.5mm] (ILa) -- (ILb);
 \node[draw=none] (ILa1) at (-9.5,-1) {$ $};
	\node[draw=none] (ILb1) at (-7.5,-1) {$ $};
	\draw[dotted,line width=0.5mm] (ILa1) -- (ILb1);
 \node[draw=none] (d1) at (-8.5,-1.6) {$\vdots$};

 \node[draw=none] (IRa) at (7.5,-0.5) {$ $};
	\node[draw=none] (IRb) at (9.5,-0.5) {$ $};
	\node at (8.5,0) {$R$};
	\draw[line width=0.5mm] (IRa) -- (IRb);
    \node[draw=none] (IRa1) at (7.5,-1) {$ $};
	\node[draw=none] (IRb1) at (9.5,-1) {$ $};
	\draw[dotted,line width=0.5mm] (IRa1) -- (IRb1);
\node[draw=none] (d1) at (8.5,-1.6) {$\vdots$};
	\end{tikzpicture} 
	\caption{Base instance for transfer charging with $k=2$.
}\label{fig:base-xfer}
\end{figure}
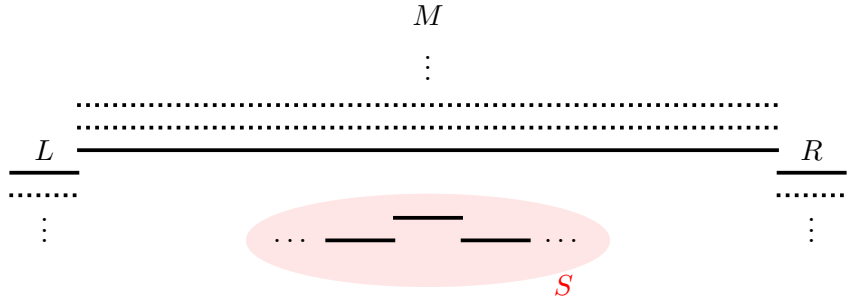

\begin{lemma}\label{lemm:k_2_main}
    For any instance with a structure as depicted in figure \ref{fig:base-xfer}, we have that\\ $\E[\Phi(I) \; | \; I \in ALG] \leq 2.5$.
\end{lemma}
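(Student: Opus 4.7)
The plan is to decompose the charge as $\Phi(I) = DC(I) + TC(I)$ and analyze by cases on the length of the predecessor trace of $I$, which is at most $2$ because $k = 2$.

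If $I$ has no predecessor then $TC(I) = 0$, and the observation that at most two $OPT$ intervals can partially conflict with $I$ (together with the fact that if $I$ is itself in $OPT$ then no other $OPT$ interval conflicts with it) gives $DC(I) \leq 2$, so $\Phi(I) \leq 2$ deterministically. The nontrivial case is when $I \in S$ entered $ALG$ by replacing some $I_M \in M$. In that case $TC(I)$ equals $DC(I_M)$ at the instant of replacement, and since each of $DC(I_M)$ and $DC(I)$ can in principle be as large as $2$ the a priori bound is only $\Phi(I) \leq 4$. The goal is to use the random arrival order to bring the expectation down to $2.5$.

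The observation I would lean on is that every $OPT$ interval contributes at most $1$ to $\Phi(I)$, and that the boundary intervals $I_L \in L$ and $I_R \in R$ -- which we may choose to be the \emph{latest}-arriving copies in their respective classes, using the freedom to let $OPT^{\mathcal{I}}_\sigma$ depend on $\sigma$ -- each contribute to $\Phi(I)$ through one of two mutually exclusive channels: $I_L$ contributes $1$ to $TC(I)$ exactly when it arrives in the time window between $I_M$'s acceptance and its eviction by $I$ (so it is direct-charged to $I_M$ and then inherited by $I$), and it contributes $1$ to $DC(I)$ only when it arrives after $I$ is accepted and still partially conflicts with $I$ itself. Contributions from $OPT \cap S$ go entirely through $DC(I)$ and are jointly capped by the two-partial-conflicts bound. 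Conditioning on the event $E$ that $I \in ALG$ with predecessor $I_M$, I would use exchangeability of the random permutation to bound $\Pr[I_L \in \text{window}\mid E]$ and $\Pr[I_R \in \text{window}\mid E]$ and combine these with the cap on $DC(I)$. The crucial accounting is that whenever $I_L$ (or $I_R$) falls in the transfer window it simultaneously frees up one unit of the $DC(I)$ budget, so $\E[TC(I) + DC(I) \mid E]$ is governed by a linear trade-off rather than a sum of worst cases; a direct calculation then yields $\E[\Phi(I) \mid E] \leq 2.5$.

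The main obstacle will be the presence of multiple copies in $L, R, M$. Earlier (non-$OPT$) copies can themselves be accepted and block the acceptance of $I_M$ entirely, so the event $E$ couples the positions of the copies in ways that are not immediately separable. I would address this by letting $OPT^{\mathcal{I}}_\sigma$ depend on $\sigma$ so that $I_L, I_R$ are always the latest-arriving copies (hence always arrive after any blocking copy of the same class), and by using a symmetry argument over the positions of the non-$OPT$ copies to reduce the conditioning on $E$ to a clean statement about the relative order of the small set $\{I_M, I, I_L, I_R\}$ alone.
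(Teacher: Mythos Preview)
Your trade-off claim --- that ``whenever $I_L$ (or $I_R$) falls in the transfer window it simultaneously frees up one unit of the $DC(I)$ budget'' --- is the crux of your argument, and it is not valid in the case that matters. The interval $I\in S$ that receives the transfer charge need not conflict with $I_L$ or $I_R$ at all: if $I$ sits strictly inside $M$, away from both boundaries, then $I_L$ and $I_R$ can reach $\Phi(I)$ \emph{only} through the transfer channel. Meanwhile the two $DC(I)$ slots are fully occupied by the two $OPT\cap S$ neighbours of $I$. In this configuration there is no linear trade-off between $TC$ and $DC$; placing $I_L$ in the window adds $1$ to $TC(I)$ and frees nothing, so you are back to needing $\E[TC(I)\mid E]\leq \tfrac12$ outright.

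Your proposed symmetry reduction to the four items $\{I_M,I,I_L,I_R\}$ does not give that bound. What makes $\E[TC(I)]$ small is precisely the presence of the \emph{other} intervals in $S$: the transfer window closes the moment the first $S$-interval appears, so $I_L$ is in the window only if the last copy in $L$ beats every element of $S$, an event of probability $\frac{|L|!\,|S|!}{(|L|+|S|)!}$ (which for $|L|=1$ is $\tfrac{1}{|S|+1}$). Discarding the other $|S|-1$ intervals inflates this to roughly $\tfrac12$ per side, yielding only $\E[TC]\leq 1$ and hence $\E[\Phi]\leq 3$. The paper instead computes $\E[TC(I)]$ explicitly as a function of $(|L|,|R|,|M|,|S|)$, obtains the closed form $\frac{|M|}{N}\bigl[\frac{|L|!\,|S|!}{(|L|+|S|)!}+\frac{|R|!\,|S|!}{(|R|+|S|)!}\bigr]$, and then optimises over the sizes, getting $\E[TC(I)]\leq\tfrac12$ as soon as $|S|\geq 3$. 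The cases $|S|\in\{1,2\}$ are handled separately --- and there, interestingly, a trade-off of exactly the flavour you describe \emph{is} what closes the argument, because with so few intervals in $S$ the one that lands in $ALG$ is forced either to be optimal (so $DC=1$) or to conflict with a boundary interval.
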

\begin{proof}
    We will be writing $\E[\Phi(I)]$ for readability. We have that $\Phi(I) = TC(I) + DC(I)$. As mentioned before, $\forall I, DC(I) \leq 2$. We will now bound $\E[TC(I)]$. Let $TC_{1}$ denote the event that a transfer charge of $1$ occurs, and $TC_{2}$ denote the event that a transfer charge of $2$ occurs. We focus on the first arrival of an interval from $S$, as that interval will receive the transfer charge. Let $N = |L| + |R| + |M| + |S|$.\\\\

    We want to compute: $\underset{|L|,|R|,|M|,|S|}{argmax(\E[TC(I)])}$, where $\E[TC(I)] = Pr(TC_{1}) + 2Pr(TC_{2})$.\\\\

    \textit{Case of $TC_{2}$:} For a transfer charge of $2$ to occur, it must be that an interval from $M$ arrives first, and that all the intervals in $L \cup R$ arrive before the first interval from $S$. This is an experiment of drawing without replacement, and the probability that we get all intervals in $L\cup R$ before the first interval from $S$ is the following:
    $$\frac{|L|+|R|}{|L|+|R|+|S|} \cdot \frac{|L|+|R|-1}{|L|+|R|+|S|-1} \dots \frac{1}{|S| + 1} = \frac{(|L|+|R|)!\cdot |S|!}{(|L|+|R|+|S|)!}$$\\
    and therefore, $$Pr(TC_{2}) = \frac{|M|}{N}\cdot \frac{(|L|+|R|)!\cdot |S|!}{(|L|+|R|+|S|)!}$$\\\\

    \textit{Case of $TC_{1}$:} For a transfer charge of $1$ to occur, it must be that an interval from $M$ arrives first, and then one of two cases: all intervals from $L$ (respectively $R$) arrive, followed by the first interval of $S$, and the last interval of $R$ (respectively $L$) arrives after. These two cases are symmetrical and we'll focus on the first one, which can be visualized as follows:
    $$\text{first } M \rightarrow \text{last } L \rightarrow \text{first } S \rightarrow \text{last } R$$\\
    Consider the following two events:\\
    \textit{Event $A_L$:} The first interval from $S$ arrives after the last interval from $L$.\\
    \textit{Event $B_R$:} The last interval from $R$ arrives after the first interval from $S$.\\

    We want to compute $Pr(A_L \cap B_R)$. Notice that in the previous case of $TC_2$ we computed $Pr(A_L \cap \overline{B_R})$. We get that:
    $$Pr(\overline{B_R} |A_L) = \frac{Pr(A_L \cap \overline{B_R})}{Pr(A_L)} = \frac{(|L| + |R|)! \cdot (|L| + |S|)!}{(|L| + |R| + |S|)!\cdot |L|!}$$\\\\
    \begin{align*}
     Pr(A_L \cap B_R)&= Pr(B_R|A_L)\cdot Pr(A_L) = (1-Pr(\overline{B_R}|A_L))\cdot Pr(A_L)\\\\
    &= \left(1-\frac{(|L|+|R|)!\cdot(|L|+|S|)!}{(|L|+|R|+|S|)!\cdot |L|!}\right)\cdot \frac{|L|!\cdot |S|!}{(|L|+|S|)!}\\\\
    &= \frac{|L|!\cdot |S|!}{(|L|+|S|)!} - \frac{(|L|+|R|)!\cdot |S|!}{(|L|+|R|+|S|)!}
\end{align*} 
Similarly, for the symmetrical case we get that:
$$Pr(A_R\cap B_L) = \frac{|R|!\cdot |S|!}{(|R|+|S|)!} - \frac{(|L|+|R|)!\cdot |S|!}{(|L|+|R|+|S|)!}$$
Combining the two cases we get that:
\begin{align*}
     Pr(TC_1)&=\frac{M}{N} \left[Pr(A_L \cap B_R) + Pr(A_R \cap B_L)\right]\\\\
    &=\frac{M}{N} \left[\frac{|L|!\cdot |S|!}{(|L|+|S|)!} + \frac{|R|!\cdot |S|!}{(|R|+|S|)!} -2\frac{(|L|+|R|)!\cdot |S|!}{(|L|+|R|+|S|)!}\right] \\\\
\end{align*}
    Finally, we have that:
    $$\E[TC(I)] = Pr(TC_{1}) + 2Pr(TC_{2}) =  \frac{M}{N} \left[\frac{|L|!\cdot |S|!}{(|L|+|S|)!} + \frac{|R|!\cdot |S|!}{(|R|+|S|)!} \right]$$
    We have that $(|L|,|R|,|S|,|M|) \in \{\mathbb{N}^{*}\}^{4}$. We will also assume that $|S|\geq 3$. We deal with the cases of $|S| = 1$ and $|S|=2$ in appendix \ref{app:A}. To maximize $\E[TC(I)]$ we set $|L| = |R| = 1$, and $|S| = 3$, and we get:
  $$\E[TC(I)] \leq \frac{|M|}{|M|+5}\cdot \frac{2 \cdot 3!}{4!} = \frac{1}{2}\frac{|M|}{|M|+5} \xrightarrow{|M|\rightarrow +\infty}\frac{1}{2}$$\\\\
  Therefore, $\E[\Phi(I)] = \E[TC(I) + DC(I)] \leq \frac{1}{2} + 2$.
\end{proof}
\vspace{5mm}
\begin{corollary}
    Algorithm \ref{alg:replace-sub} is $2.5-$competitive on instances of the form as in figure \ref{fig:base-xfer}.\\
\end{corollary}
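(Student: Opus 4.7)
The plan is to chain together two ingredients already established in the paper: the general charging bound from the Preliminaries, and Lemma~\ref{lemm:k_2_main}. Recall that in the Preliminaries, via Wald's inequality, we obtained that for any instance and Algorithm \ref{alg:replace-sub},
\[
\frac{OPT}{\E[ALG]} \;\leq\; \max_{I}\bigl\{\E[\Phi(I) \mid I \in ALG]\bigr\}.
\]
Lemma~\ref{lemm:k_2_main} bounds the right-hand side by $2.5$ for every instance of the form in Figure~\ref{fig:base-xfer}. Substituting this bound into the displayed inequality yields $OPT / \E[ALG] \leq 2.5$ on any such instance, which by the definition of the strict competitive ratio $CR(A) = \max_{\mathcal{I}} OPT/\E[ALG]$ is exactly the claim.

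Before writing this out formally, I would double-check that the conditions needed to invoke the Wald-based bound actually hold for these base instances. The two things to verify are (i) the sum $\sum_{I \in ALG} \Phi(I)$ equals $OPT$, which is a constant determined by $\mathcal{I}$ and independent of $\sigma$ — this follows from the mapping $\mathcal{F}^{\mathcal{I}}_{\sigma}$ being a total function from $OPT^{\mathcal{I}}_{\sigma}$ into $ALG^{\mathcal{I}}_{\sigma}$ as described in the Preliminaries, where $|OPT^{\mathcal{I}}_{\sigma}|$ is the same across all $\sigma$; and (ii) $\Phi(I)$ is uniformly bounded (here $\Phi(I) \leq TC(I) + DC(I) \leq 2+2 = 4$, since an interval can be directly charged by at most two optimal intervals and, in a $k=2$ instance, its predecessor trace has length at most two). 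Both conditions mirror exactly what was needed for the generic argument in the Preliminaries.

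The main obstacle has already been dispatched inside Lemma~\ref{lemm:k_2_main}, namely the careful per-event probability computation for $TC_1$ and $TC_2$ together with the optimization over $(|L|,|R|,|M|,|S|)$. Consequently, the proof of the corollary is essentially a one-line invocation: apply Lemma~\ref{lemm:k_2_main} inside the competitive-ratio bound and conclude.
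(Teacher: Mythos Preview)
Your proposal is correct and matches the paper's approach: the corollary is stated without a separate proof, as it is immediate from combining the Wald-based competitive-ratio bound in the Preliminaries with Lemma~\ref{lemm:k_2_main}. Your extra verification of conditions (i) and (ii) is a reasonable sanity check but not something the paper spells out.
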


Of course, so far we have only analyzed what may seem like very restrictive base instances. 
It turns out the base instances are quite representative in the sense that we can show how any input sequence can be decomposed into base instances.
We first prove the following Theorem for the general case of $k=2$. \\
\begin{theorem}
    Algorithm \ref{alg:replace-sub} achieves a competitive ratio of $2.5$ for the problem of interval selection on instances with at most two different lengths.
\end{theorem}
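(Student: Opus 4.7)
The plan is to invoke the general bound $CR \le \max_I \E[\Phi(I) \mid I \in ALG]$ from the preliminaries and show that for every interval $I$ in the algorithm's final solution we have $\E[\Phi(I) \mid I \in ALG] \le 2.5$. Decompose $\Phi(I) = DC(I) + TC(I)$. The bound $DC(I) \le 2$ holds deterministically: an interval in the final $ALG$ cannot have an $OPT$ interval strictly inside it (else $I$ would have been displaced) nor strictly contain one (ruled out by the same argument applied to arrivals before or after $I$), so every $OPT$ interval directly charged to $I$ either equals $I$ or partially conflicts with it, and there are at most two such partial conflicts. Hence the whole task reduces to bounding $\E[TC(I) \mid I \in ALG] \le 1/2$.

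For $k=2$ any predecessor trace has length at most two, since once a small interval is accepted no interval in the instance can entirely contain it. Thus either $I$ has no predecessor and $TC(I)=0$, or $I$ is a small interval that displaced a large predecessor $P$, and $TC(I)$ equals the direct charge accumulated by $P$ at the moment of displacement. I would localize the analysis around $(P,I)$ by identifying a sub-configuration matching Figure \ref{fig:base-xfer}: let $M$ be the set of large intervals of the instance whose footprint equals that of $P$, let $S$ be the set of small intervals contained in every member of $M$ (so $I \in S$), and let $L$, respectively $R$, be the set of intervals that could partially conflict with members of $M$ on the left, respectively right. Exploiting the flexibility of choosing $OPT^{\mathcal{I}}_{\sigma}$ per sequence $\sigma$ noted in the preliminaries, I pick $I_L \in L$ and $I_R \in R$ to be the latest arriving members of their respective sets, which is exactly the optimal-solution choice underlying Lemma \ref{lemm:k_2_main}.

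The events $TC_1$ and $TC_2$ that drive $TC(I)$ depend only on the relative arrival order of items within $L \cup R \cup M \cup S$. Because $\sigma$ is a uniformly random permutation of the full instance, the order it induces on this sub-collection is itself uniform; hence the probability estimates derived in Lemma \ref{lemm:k_2_main} apply verbatim and yield $\E[TC(I)] \le 1/2$. Combining with $DC(I) \le 2$ gives $\E[\Phi(I) \mid I \in ALG] \le 2.5$, and invoking the Wald-type inequality from the preliminaries produces the claimed competitive ratio.

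The main obstacle is to argue cleanly that the pieces of the instance lying outside $L \cup R \cup M \cup S$ do not bias $\E[TC(I) \mid I \in ALG]$. Concretely, I need to handle (i) small intervals that lie inside some but not all members of $M$, which might displace particular members of $M$ before $P$ is replaced; (ii) degenerate configurations in which $L$ or $R$ is empty, where the formulas from the proof of Lemma \ref{lemm:k_2_main} still hold and give a strictly smaller bound; and (iii) the compatibility of conditioning on $I \in ALG$ with the claim that the relative order on $L \cup R \cup M \cup S$ remains uniform. Each item is a local verification, but together they constitute the technical core of the reduction from base instances to arbitrary two-length inputs.
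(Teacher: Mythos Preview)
Your approach is essentially the paper's: localize each potential transfer-charge event to a base sub-instance $(L,R,M,S)$ and invoke Lemma~\ref{lemm:k_2_main}, then argue that interaction with the rest of the input can only lower the expected transfer charge. The paper frames this slightly differently---it fixes a global partition of the whole instance into basic constructions $C_i=(L_i,R_i,M_i,S_i)$ indexed by the distinct ``middle'' large intervals, rather than defining the tuple around the realized predecessor $P$ of a specific $I$---but the substance is identical, and the paper's treatment of your obstacles (i)--(iii) is just as brief as yours: it simply asserts that outside intervals can only \emph{block} elements of $L_i\cup R_i$ or $S_i$, and that blocking can never increase the expected charge relative to the isolated base instance. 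One small point: your definition of $M$ as large intervals with footprint \emph{equal} to $P$'s is narrower than the paper's (where members of $M_i$ need not coincide exactly), so your obstacle (i) is slightly larger than it needs to be; adopting the paper's coarser grouping by ``which pair $(L_i,R_i)$ does this large interval partially conflict with'' would absorb that case.
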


\begin{proof}
Let $C_{i} = (L_{i},R_{i},M_{i},S_{i})$ denote a basic construction (or sub-instance) that follows the structure described earlier, with $L_i\cup R_i \neq \emptyset$. Given an optimal solution $OPT$, any instance can be partitioned into a set of such constructions, each being uniquely identified by its middle non-optimal intervals. Let $\mathcal{C} = \{C_{1},C_{2},...,C_{n}\}$ denote the set of all these constructions. Although it could be that $|M_{i}| \gg 1$, we will abuse the notation and refer to the \textit{interval $M_{i}$}. Figure \ref{fig:basic_cons_conflicts} shows an instance that is partitioned into three basic constructions: $C_1 = (\emptyset,\{I_1\},\{M_1\},\{J_1,J_2,J_3\})$, $C_2 = (\{I_1\},\{I_3\},\{M_2\},\{I_2,J_4\})$, and $C_3=(\{I_2\},\{I_4\},\{M_3\},\{J_4,I_3\})$.\\\\
\begin{figure}[H]
	\centering
	
	\begin{tikzpicture}[scale=0.5]
	
	\node at (-8.5,-4) {$J_1$};
	\node[draw=none] (J1a) at (-9.7,-3.5) {$ $};
	\node[draw=none] (J1b) at (-7.3,-3.5) {$ $};
	\draw[line width=0.5mm] (J1a) -- (J1b);
 
	\node at (-7,-4.5) {$J_2$};
	\node[draw=none] (J2a) at (-8.2,-4) {$ $};
	\node[draw=none] (J2b) at (-5.8,-4) {$ $};
	\draw[line width=0.5mm] (J2a) -- (J2b);
 
	\node at (-5.5,-4) {$J_3$};
	\node[draw=none] (J3a) at (-6.7,-3.5) {$ $};
	\node[draw=none] (J3b) at (-4.3,-3.5) {$ $};
	\draw[line width=0.5mm] (J3a) -- (J3b);

    \node at (0.5,-3) {$I_1$};
    \node[draw=none] (I1a) at (-0.7,-3.5) {$ $};
	\node[draw=none] (I1b) at (1.7,-3.5) {$ $};
	\draw[line width=0.5mm] (I1a) -- (I1b);

	\node[draw=none] (M1a) at (-10,-3) {$ $};
	\node[draw=none] (M1b) at (-0,-3) {$ $};
	\node at (-5,-2.5) {$M_1$};
	\draw[line width=0.5mm] (M1a) -- (M1b);

    \node[draw=none] (M2a) at (1,-3) {$ $};
	\node[draw=none] (M2b) at (11,-3) {$ $};
	\node at (6,-2.5) {$M_2$};
	\draw[line width=0.5mm] (M2a) -- (M2b);

    \node at (11.5,-3) {$I_3$};
    \node[draw=none] (I3a) at (10.3,-3.5) {$ $};
	\node[draw=none] (I3b) at (12.7,-3.5) {$ $};
	\draw[line width=0.5mm] (I3a) -- (I3b);

    \node[draw=none] (M3a) at (5,-4.5) {$ $};
	\node[draw=none] (M3b) at (15,-4.5) {$ $};
	\node at (10,-4) {$M_3$};
	\draw[line width=0.5mm] (M3a) -- (M3b);

    \node at (4.5,-4.5) {$I_2$};
    \node[draw=none] (I2a) at (3.3,-4) {$ $};
	\node[draw=none] (I2b) at (5.7,-4) {$ $};
	\draw[line width=0.5mm] (I2a) -- (I2b);

    \node at (15.5,-3.5) {$I_4$};
    \node[draw=none] (I4a) at (14.3,-4) {$ $};
	\node[draw=none] (I4b) at (16.7,-4) {$ $};
	\draw[line width=0.5mm] (I4a) -- (I4b);

    \node at (7,-5.5) {$J_4$};
    \node[draw=none] (J4a) at (5.8,-5) {$ $};
	\node[draw=none] (J4b) at (8.2,-5) {$ $};
	\draw[line width=0.5mm] (J4a) -- (J4b);

	\end{tikzpicture} 
	\caption{Example instance with three basic constructions.}\label{fig:basic_cons_conflicts}
\end{figure}
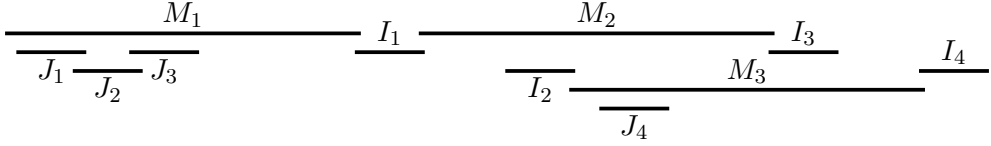

    We focus on these constructions, because a transfer charge can only occur to the 
    intervals in $(S_{1} \cup S_{2} \cup ... \cup S_{n})$. It is helpful to associate the event of a transfer charge with the related construction, and note that no transfer charge will be associated with that construction again. For example, after interval $J_4$ is taken (fig. \ref{fig:basic_cons_conflicts}), no transfer charge can be associated with $C_2$ and $C_3$. This is because whenever an interval from $S_{i}$ is accepted, it can never be replaced again, and $M_{i}$ cannot be accepted again. 
    Consider the middle intervals $\{M_{1},...,M_{n}\}$. A subset of those intervals will be taken by the algorithm during the execution. Whenever $M_{i}$ is taken, a transfer charge may occur. Intervals outside $C_{i}$ can affect it by blocking intervals in $L_{i} \cup R_{i}$ (before or after $M_{i}$ is taken), or $S_{i}$ (before $M_{i}$ is taken). In every case, the expected total amount of charge on any $S_{i}$ is no more compared to having $C_{i}$ separately, which, in addition to the fact that any interval is transfer charged at most once, gives us the desired result.\\\\
\end{proof}

\subsection{Case of $k>2$}
In the case of $k>2$, the nesting depth can be greater than $1$, which allows for a predecessor trace of length greater than $2$. As before, we fix an optimal solution and consider the set of all basic constructions $\mathcal{C}$ derived from the instance. We note that a basic construction can now be contained in another. More specifically, if $C_{i}$ is contained in $C_{j}$, it means that $(L_{i} \cup R_{i} \cup M_{i} \cup S_{i})\subseteq S_{j}$. For every interval $J \in ALG$, we consider the predecessor trace $P_{J} = (M_{1},...,M_{d},J)$ that was formed during the execution. When an interval ($M_{i}$) is replaced by another ($M_{i+1}$), it also transfers all of its charge, and we have that $\Phi(M_{i})\leq \Phi(M_{i+1})\leq \Phi(M_{i}) + 2$. W.l.o.g. we assume that $M_{1},...,M_{d}$ are all middle intervals of some basic constructions. The only way this isn't true is if for some $i$, $M_{i}$ does not partially conflict with any optimal intervals, in which case it cannot increase the amount of charge transferred to $M_{i+1}$. Interval $J$ may or may not correspond to a middle interval, but we know that $J\in S_{d}$. The total amount of charge on any one interval is a random quantity, and we want to upper bound $\E[\Phi(J)\; | \; J \in ALG]$ for every predecessor trace $P_{J}$.\\
\begin{lemma}\label{lemm:big_k_main}
    $\E[\Phi(J) \; | \; J \in ALG] \leq 2.5$ for every $J\in ALG$ and predecessor trace $P_J$.\\
\end{lemma}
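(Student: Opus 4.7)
My plan is to extend the analysis of Lemma \ref{lemm:k_2_main} from a single basic construction to a predecessor trace of arbitrary depth by decomposing the total charge along the trace and then bounding the expected sum of direct charges using a level-by-level probabilistic argument modeled on the base case. Because every replacement transfers all of the charge of the displaced interval to its successor, I would first write $\Phi(J) = DC(J) + \sum_{i=1}^d DC(M_i)$, where $DC(M_i)$ denotes the direct charge accumulated by $M_i$ during its tenure in the algorithm's solution. The algorithm's rule still admits at most two partial conflicts per interval, so $DC(J) \le 2$, and the task reduces to showing $\E[\sum_{i=1}^d DC(M_i) \mid J \in ALG] \le \tfrac{1}{2}$.

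For each level $i$ I would fix the optimal solution to include the latest-arriving representatives $I_{L_i} \in L_i$ and $I_{R_i} \in R_i$, exactly as in Lemma \ref{lemm:k_2_main}. A direct charge from $I_{L_i}$ (resp.\ $I_{R_i}$) to $M_i$ occurs only when $I_{L_i}$ arrives strictly after $M_i$ has entered the solution but strictly before $M_{i+1}$ replaces $M_i$, since any earlier arrival of $I_{L_i}$ would itself be taken and would block $M_i$ from entering the trace, while any later arrival would be charged elsewhere. This is exactly the type of ordering event analyzed in the base case, so I would reuse the factorial-ratio probability computation there to bound each per-level contribution $\E[DC(M_i) \mid J \in ALG]$ by an expression of the same form applied to the construction $C_i$ alone.

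The main obstacle is summing these per-level bounds without picking up a factor of $d$. The nested structure $C_1 \supset C_2 \supset \cdots \supset C_d$ tightly couples the arrival-order requirements across levels: an $I_{L_i}$ contributing to $DC(M_i)$ must land in a narrow window between $M_i$ and $M_{i+1}$, and $M_{i+1}$ must itself arrive ahead of every other interval of $S_i$ for the trace to progress to level $i+1$. These constraints cascade so that the joint events required for many direct charges along the same trace become increasingly unlikely with depth, and I expect the per-level expectations to telescope to at most $\tfrac{1}{2}$ in aggregate. Intervals lying outside the chosen constructions can only block, never create, a direct charge to some $M_i$, so treating each $C_i$ in isolation yields a valid upper bound. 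Combining these pieces gives $\E[\Phi(J) \mid J \in ALG] \le 2 + \tfrac{1}{2} = 2.5$, completing the lemma.
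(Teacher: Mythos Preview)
Your decomposition $\Phi(J) = DC(J) + \sum_{i=1}^d DC(M_i)$ and the reduction to bounding the expected total transfer along the trace by $\tfrac{1}{2}$ is exactly the paper's approach, and your per-level ordering events and ``outside intervals can only block'' observation also match. However, the sentence ``I expect the per-level expectations to telescope to at most $\tfrac{1}{2}$ in aggregate'' is precisely where the real work lies, and you have not done it; without a quantitative argument the proposal does not actually establish the bound.

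What is missing is twofold. First, the nesting structure must be turned into an explicit inequality: conditioned on the trace reaching level $i$, the set $S'_i$ of small intervals still pending must contain all of $M'_{i+1},L'_{i+1},R'_{i+1},\dots,M'_d,L'_d,R'_d$ together with $S'_d$, which gives a concrete lower bound on $|S'_i|$ that grows with $d-i$. Second, plugging the Lemma~\ref{lemm:k_2_main} formula into each level (and maximizing by taking $|L'_i|=|R'_i|=1$) yields a sum of the form
\[
\sum_{i=1}^{d} \frac{M'_i}{(M'_i+2+s_i)(s_i+1)},\qquad s_i = M'_{i+1}+2+s_{i+1},\ s_d=|S'_d|,
\]
and one must actually prove this sum is at most $\tfrac{1}{4}$. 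The paper does this by an induction on $d$, showing that the first two terms can always be merged into a single term of the same shape (with $M'_1+M'_2$ in place of $M'_1$), so that $F^{D+1}(x_1,x_2,\dots)\le F^{D}(x_1+x_2,\dots)$ and the base case $d=1$ gives $\tfrac{1}{4}$. Your ``cascading constraints'' intuition is correct, but it needs this explicit nesting lower bound and the merge-induction to become a proof; the small-$|S'_d|$ cases also require separate handling of $DC$ as in Appendix~\ref{app:A}.
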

\begin{proof}
As before, we have that $DC(J)\leq 2$ and we focus on $\E[TC(J)]$. Notice that for every $P_J$, the expected amount of charge added to $J$ because of interval $M_i$ depends on the subset of intervals of $(L_i \cup R_i \cup S_i)$ that are yet to arrive after $M_i$ was accepted. We are able to derive a bound on $\E[TC(J)]$ by assuming the last interval of $L_i$ and the last interval of $R_i$ are yet to arrive, and lower bounding the remaining of $S_i$ by the fact that intervals $M_{i+1},...,J$ are yet to arrive.\\\\
Let $(S'_i, L'_i, R'_i) \subseteq (S_i, L_i, R_i)$ be the set of intervals pending to arrive after $M_{i}$ was accepted, and for $i>1$, let $M'_i \subseteq M_i$ be the set of intervals pending to arrive after $M_{i-1}$ was accepted. Similarly, let $M'_1 \subseteq M_1$ be the set of intervals pending after $M_1$ was able to be accepted by the algorithm. For readability, we omit $|\cdot|$ and refer to $|M'_{i}|, |S'_{i}|, |L'_{i}|$, and $|R'_{i}|$ as $M'_{i}, S'_{i}, L'_{i}, R'_{i}$. Let $N'_{i} = M'_{i} + L'_{i} + R'_{i} +S'_{i}$. For $1\leq i\leq d$ we have that:
\[
S'_{i} \geq S'_{d} + \sum_{j=i+1}^{d} (M'_{j}+L'_{j}+R'_{j}) \tag{1}
\]
We first consider the case of $S'_{d} \geq 3$. From the analysis of Lemma \ref{lemm:k_2_main} we get that:
\begin{align*}
\E[TC(J)] &= \frac{M'_{1}}{N'_{1}}\cdot \left[\frac{L'_{1}! \cdot S'_{1}!}{(L'_{1}+S'_{1})!} + \frac{R'_{1}! \cdot S'_{1}!}{(R'_{1}+S'_{1})!}\right] +\dots + \frac{M'_{d}}{N'_{d}}\cdot \left[\frac{'L_{d}! \cdot S'_{d}!}{(L'_{d}+S'_{d})!} + \frac{R'_{d}! \cdot S'_{d}!}{(R'_{d}+S'_{d})!}\right]\\\\
    & \leq \frac{M'_{1}}{M'_{1}+2+S'_{1}} \cdot \left[ 2\frac{S'_{1}!}{(S'_{1} + 1)!} \right] + \dots + \frac{M'_{d}}{M'_{d}+2+S'_{d}} \left[ 2\frac{S'_{d}!}{(S'_{d}+1)!} \right]\\\\
    & = 2\left[\frac{M'_1}{(M'_1+2+S'_1)\cdot(S'_1+1)} + \dots + \frac{M'_d}{(M'_d + 2 + S'_d)\cdot (S'_d+1)}\right]\\\\
    &\leq 2\left[ \frac{M'_{1}}{5+2(d-1)+\sum_{j=1}^{d}M'_{j}}\cdot \frac{1}{4+2(d-1)+\sum_{j=2}^{d}M'_{j}} + \dots + \frac{M'_{d}}{M'_{d}+5}\cdot\frac{1}{4} \right]\\\\
    &= 2\sum_{i=1}^{d} \frac{M'_{i}}{(5+2(d-i)+\sum_{j=i}^{d}M'_{j})\cdot (4+2(d-i)+\sum_{j=i+1}^{d}M'_{j})}\\\\
\end{align*}
The first inequality is because we set $L'_{i}=R'_{i}=1$. The second inequality is because of $(1)$.\\\\
Let $F^{d}(x_1,..,x_d) =\sum\limits_{i=1}^{d} \frac{x_{i}}{(5+2(d-i)+\sum_{j=i}^{d}x_{j})\cdot (4+2(d-i)+\sum_{j=i+1}^{d}x_{j})}$, with $x_i \geq 1$ for every $i$. For readability, we will write $F^d(x_1,...,x_d) = \sum\limits_{i=1}^{d} \frac{x_{i}}{(x_i + 2 + s_i)\cdot (s_i + 1)}$, with $s_i = x_{i+1} + 2 + s_{i+1}$, and $s_d = 3$. We will show that for any $d$ and $x_1,...,x_d$:
$$F^{d}(x_1,...,x_d) \leq \frac{1}{4}$$
We show this by induction on $d$:\\\\
\textit{Base case $d=1$}: $F^{1}(x_1)=\frac{x_1}{4(x_1+5)} \leq \frac{1}{4}$ holds.\\\\
\textit{Induction step}: For $d=D$, we assume that $F^D(x_1,..,x_D) \leq \frac{1}{4}$.\\\\
For $d=D+1$, we focus on the first two terms of the sum:
\begin{align*}
    \sum\limits_{i=1}^{2} \frac{x_{i}}{(x_i + 2 + s_i)\cdot (s_i + 1)}&=\frac{x_1}{(x_1 + 2 + s_1)\cdot (s_1 + 1)} + \frac{x_2}{(x_2 + 2 +s_2)\cdot (s_2 + 1)}\\\\
    &=\frac{x_1}{(x_1 + x_2 + 4 + s_2)\cdot (x_2 + 3 + s_2)} + \frac{x_2}{(x_2 + 2 +s_2)\cdot (s_2 + 1)}
\end{align*}
We will show that:
\[
\sum\limits_{i=1}^{2} \frac{x_{i}}{(x_i + 2 + s_i)\cdot (s_i + 1)}\leq \frac{x_1 + x_2}{(x_1 + x_2 + 2 + s_2)\cdot (s_2 + 1)} \tag{2}
\]
We have that:
\begin{align*}
    \frac{x_1 + x_2}{(x_1 + x_2 + 2 + s_2)\cdot (s_2 + 1)} - \frac{x_1}{(x_1 + x_2 + 4 +    s_2)\cdot (x_2 + 3 + s_2)} - \frac{x_2}{(x_2 + 2 +s_2)\cdot (s_2 + 1)} &\geq\\\\    
    \frac{x_1}{(x_1 + x_2 + 2 + s_2)\cdot (s_2 + 1)} - \frac{x_1}{(x_1 + x_2 + 4 +    s_2)\cdot (x_2 + 3 + s_2)}&\geq 0
\end{align*}
Therefore $(2)$ holds, and we have that $F^{D+1}(x_1,x_2,..,x_{D+1})\leq F^{D}(x_1+x_2,x_3,..,x_{D+1})$, which we know is at most $\frac{1}{4}$ by the induction hypothesis.\\\\
Putting everything together, we have that $\E[TC(J)]\leq \frac{1}{2}$, and because $DC(J) \leq 2$, we get that $\E[\Phi(J)]\leq 2.5$. The cases of $s_d = 1$ and $s_d = 2$ are dealt with in appendix \ref{app:A}.
\end{proof}

\begin{corollary}
    Algorithm \ref{alg:replace-sub} is $2.5$-competitive for the problem of interval selection for all $k$.
\end{corollary}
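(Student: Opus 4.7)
The plan is to combine Lemma \ref{lemm:big_k_main} with the general competitive-ratio framework developed in the Preliminaries. Recall that the Wald-inequality argument there reduces bounding the competitive ratio to bounding the maximum expected charge on any single interval in the algorithm's solution, i.e.
\[
\frac{OPT}{\E[ALG]} \;\leq\; \max_{I}\,\E\!\left[\Phi(I) \;\middle|\; I \in ALG\right].
\]
So the whole task is to upgrade the per-trace bound of Lemma \ref{lemm:big_k_main} into a bound that is uniform over all intervals $I$ that appear in $ALG$ on any random arrival order.

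First I would fix an arbitrary input instance $\mathcal{I}$ and, as in the $k=2$ theorem, choose the optimal solution $OPT^{\mathcal{I}}_{\sigma}$ that contains the latest-arriving ``boundary'' intervals of every basic construction. Using this $OPT$, I would decompose $\mathcal{I}$ into the family $\mathcal{C} = \{C_1,\dots,C_n\}$ of basic constructions $C_i = (L_i, R_i, M_i, S_i)$ exactly as in Section 3.1, except that when $k > 2$ the constructions can now be nested (if $C_i$ sits strictly inside $C_j$ then $L_i \cup R_i \cup M_i \cup S_i \subseteq S_j$). The key structural observation, already used implicitly in Lemma \ref{lemm:big_k_main}, is that for every $I \in ALG$ its predecessor trace $P_I = (M_{i_1},\dots,M_{i_d}, I)$ corresponds to a chain of basic constructions $C_{i_1} \supset C_{i_2} \supset \dots \supset C_{i_d}$ in which $I \in S_{i_d}$, and any transfer charge that ever lands on $I$ must be associated with exactly one of these constructions (since once an $S_{i_j}$-interval is accepted, neither $M_{i_j}$ nor any other outer middle interval can accrue further transfer charge to $I$).

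Given this structural setup, the proof of the corollary is then a short assembly. Decompose $\Phi(I) = TC(I) + DC(I)$; the greediness of Algorithm \ref{alg:replace-sub} together with the fact that $I$ partially conflicts with at most two intervals of $OPT$ gives $DC(I) \leq 2$ deterministically, and Lemma \ref{lemm:big_k_main} combined with the telescoping inequality $F^{D+1}(x_1,\dots,x_{D+1}) \leq F^{D}(x_1+x_2,x_3,\dots,x_{D+1}) \leq 1/4$ established in its proof gives $\E[TC(I) \mid I \in ALG, P_I] \leq 1/2$ for every realization of the predecessor trace. Since the bound $1/2$ is uniform in the trace length $d$ and in the sizes $|M'_i|, |L'_i|, |R'_i|, |S'_i|$, taking total expectation over $P_I$ yields $\E[\Phi(I) \mid I \in ALG] \leq 2.5$ for every $I$. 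Plugging this into the Wald bound completes the proof.

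The main obstacle, which is really already resolved by Lemma \ref{lemm:big_k_main}, was controlling how the transfer charges accumulate along arbitrarily long predecessor traces without the bound blowing up with $d$; the induction on $d$ via inequality $(2)$ in that proof is precisely what makes this corollary immediate rather than requiring a further argument. The only step requiring care at this stage is to confirm that the ``worst case among all predecessor traces'' analyzed in Lemma \ref{lemm:big_k_main}, which uses the inequality $S'_i \geq S'_d + \sum_{j>i}(M'_j+L'_j+R'_j)$, is indeed realized (or dominated) by the random arrival process for every $I \in ALG$ on a general instance—this follows because the constraints $L'_i = R'_i = 1$ and $S'_d \geq 3$ used in the proof only make the bound larger, and the boundary cases $S'_d \in \{1,2\}$ are handled in Appendix \ref{app:A}.
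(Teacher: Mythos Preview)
Your proposal is correct and matches the paper's approach. In the paper the corollary is stated without a separate proof because it follows immediately from Lemma~\ref{lemm:big_k_main} together with the Wald-inequality bound $\frac{OPT}{\E[ALG]} \leq \max_I \E[\Phi(I)\mid I\in ALG]$ established in the Preliminaries; your write-up simply spells out this assembly (and the structural decomposition into nested basic constructions already set up at the start of Section~3.2) in more detail.
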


\section{Randomness Extraction}
\label{sec:randomness-extraction}

Our analysis of the competitiveness for unweighted interval selection is another example of the power of random-order arrivals. This leads us to a  basic question with regard to  online algorithms; namely, what  is the power of random-order deterministic algorithms relative to adversarial order randomized algorithms. We know that there are a number of problems where deterministic random-order algorithms provide provably better (or at least as good) competitive ratios than randomized algorithms with adversarial order (e.g., prominently, the secretary problem and bipartite matching \cite{Mehta2013}). But are there problems where adversarial order randomized algorithms are provably (or at least seemingly) better than random-order deterministic algorithms? It is natural then to see if we can use the randomness in the arrival of input items to extract random bits. Such bits may then  be used to derandomize certain algorithms when assuming random-order arrivals. \textit{Barely random} algorithms \cite{reingold1994randomized} use a (small) constant number of random bits, and are well suited to be considered for this purpose. These algorithms are often used in the \textit{classify and randomly select} paradigm, where inputs are partitioned into a small number of classes, and the algorithm randomly selects a class of items to work with. 

Our focus will be on simple $1$-bit randomness extraction processes which we can then apply to $1$-bit barely random algorithms. We briefly consider a way of extracting two biased bits for derandomizing two-length interval selection with arbitrary weights, and discuss possible extensions to extracting multiple bits from random order sequences.\\\\ 



 Our first randomness extraction process is described in Process \ref{alg:proc-1}. We assume there exist at least two different classes, or item types, that all input items belong to. Furthermore, we maintain a counter that represents the number of items that have arrived so far. Our process returns $1$, if the first item of the second type to arrive online is when the counter is even. The adversary can choose the number of items of each type to affect the probability of getting either output, but we show that the worst case bias of this bit is at most $\frac{2}{3}$.

\begin{process}
\caption{Biased bit extraction}\label{alg:proc-1}
\begin{algorithmic}
\State On the arrival of $I_i$:
\If{$i = 1$}
    \State $type \gets type(I_1)$
    \Else
        \If{$type \neq type(I_i)$}
            \State return($1-(i \mod 2)$) and terminate the process
        \EndIf
\EndIf

\end{algorithmic}
\end{process}
\begin{theorem}
    The bit extracted by process \ref{alg:proc-1} has a worst case bias of at most $\frac{2}{3}$. That is, $Pr(bit=1) \in (\frac{1}{2},\frac{2}{3})$.
\end{theorem}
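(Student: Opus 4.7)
The plan is to reformulate $Pr(bit=1)$ in terms of the initial same-type block length and then bound the resulting alternating sum via a one-dimensional optimization. Let $L$ denote the length of the maximal prefix of the random arrival whose items all share the type of $I_1$. Since the process terminates at the arrival of the first differently-typed item, at position $X = L+1$, returning $1 - (X \bmod 2)$, we have $bit = 1$ iff $X$ is even iff $L$ is odd, giving $Pr(bit = 1) = Pr(L \text{ odd})$.

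I would first prove the claim in the canonical two-type case with counts $a$ and $b$ and $n = a+b$. A direct counting of permutations whose first $l$ positions are monochromatic yields $Pr(L \ge l) = (a^{\underline{l}} + b^{\underline{l}})/n^{\underline{l}}$, where $x^{\underline{l}} = x(x-1)\cdots(x-l+1)$ is the falling factorial; telescoping gives $Pr(L \text{ odd}) = \sum_{l \ge 1}(-1)^{l+1} Pr(L \ge l)$. In the scaling $a/n \to \alpha$ the alternating sum collapses to the closed-form geometric series
\[
g(\alpha) \;=\; \frac{\alpha}{1+\alpha} + \frac{1-\alpha}{2-\alpha}.
\]
The two identities
\[
g(\alpha) - \tfrac{2}{3} = -\frac{(2\alpha-1)^2}{3(2+\alpha-\alpha^2)}, \qquad g(\alpha) - \tfrac{1}{2} = \frac{3\alpha(1-\alpha)}{2(2+\alpha-\alpha^2)}
\]
immediately give $g(\alpha) \in (1/2, 2/3]$ for $\alpha \in (0,1)$, with a balanced split $\alpha = 1/2$ being the unique maximizer.

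To handle $k \ge 3$ distinct types, I would exploit the fact that Process \ref{alg:proc-1} distinguishes only ``same as $I_1$'' from ``different,'' so conditioning on the type $t$ of $I_1$ collapses the instance into a two-class problem with sizes $n_t$ and $n - n_t$; averaging the two-class bound with weight $n_t/n$ then recovers a uniform bound. The main obstacle is upgrading the continuous bound cleanly to finite $n$: I would pair consecutive terms $Pr(L \ge 2l-1) - Pr(L \ge 2l)$ of the alternating sum and bound each pair via the ratio of hypergeometric tails, which is monotone in the parameters, with the delicate boundary case arising when $l > \min(a,b)$ and one falling factorial vanishes. Properly handling this truncation---especially small cases like $\min(a,b) = 1$, where the pairing loses its symmetry---is where the technical care is concentrated, and it is also what separates the strict lower bound $Pr(bit=1) > 1/2$ (which follows from the already positive contribution of the $l=1,2$ pair once $\min(a,b) \ge 1$) from the upper bound, which requires control over the full tail.
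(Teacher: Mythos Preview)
Your core argument for the two-type case is correct and follows essentially the same route as the paper: both work in the infinite-population (i.i.d.) limit, reduce to a geometric-series computation, and arrive at the identical closed form (your $g(\alpha)$ equals the paper's $f(\alpha)$ after simplification). Your explicit identities for $g(\alpha)-\tfrac{2}{3}$ and $g(\alpha)-\tfrac{1}{2}$ are a cleaner verification than the paper's appeal to a plot, and you correctly note that $g(1/2)=\tfrac{2}{3}$, so the upper endpoint is actually attained in the limit.

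Your $k\ge 3$ paragraph, however, contains a genuine error. Conditioning on the type $t$ of $I_1$ does collapse the process to two classes of sizes $n_t$ and $n-n_t$, but only \emph{conditionally on the first item lying in the size-$n_t$ class}; the relevant conditional probability is $1/(1+\alpha_t)$ in the limit, not the unconditional two-class bound $g(\alpha_t)\le \tfrac{2}{3}$. Averaging over $t$ therefore gives $\sum_t \alpha_t/(1+\alpha_t)$, which for $k$ equally frequent types equals $k/(k+1)$ and exceeds $\tfrac{2}{3}$ once $k\ge 3$. The paper's statement and proof implicitly concern exactly two types (as in its classify-and-randomly-select applications), so this extension is both unnecessary and, as stated, false. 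Your finite-$n$ discussion is likewise not required---the paper explicitly adopts the infinite-population approximation---and in fact the bound can fail for small $n$ (e.g.\ $a=b=1$ forces $bit=1$ deterministically), so the proposed ``upgrade'' could not go through without additional hypotheses.
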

\begin{proof}
    Let there be $A$ items of $type_{A}$ and $B$ items of $type_{B}$, with $N = A + B$. Let $E_{v}$ be the event where the second item type arrives on an even counter. Let also $A_{e}$ (respectively $B_{e}$) be the event that the first appearance of $type_{A}$ ($type_{B}$) is on an even counter, and $F_{A}$ (resp. $F_{B}$) be the event that the very first item that arrives is of $type_{A}$.
    We assume that $N$ is very large, and we are sampling from an infinite population. We have that:
    $$Pr(E_{v})= Pr(B_{e}|F_{A})\cdot Pr(F_{A}) +  Pr(A_{e}|F_{B})\cdot Pr(F_{B})$$
    We start by computing $Pr(B_{e}|F_{A})$:
    $$ Pr(B_{e}|F_{A})= \frac{B}{N} + \left(\frac{A}{N}\right)^{2} \cdot \frac{B}{N} + \left(\frac{A}{N}\right)^{4} \cdot \frac{B}{N} + \left(\frac{A}{N}\right)^{6} \cdot \frac{B}{N} +\dots = \frac{B}{N}\sum_{i=0}^{+\infty}\left(\frac{A}{N}\right)^{2i}$$
and therefore:
$$Pr(B_{e}|F_{A})\cdot Pr(F_{A}) = \frac{AB}{N^{2}}\sum_{i=0}^{+\infty}\left(\frac{A}{N}\right)^{2i}$$
Similarly, we get that: 
$$Pr(A_{e}|F_{B})\cdot Pr(F_{B}) = \frac{AB}{N^{2}}\sum_{i=0}^{+\infty}\left(\frac{B}{N}\right)^{2i}$$
Putting everything together:
$$Pr(E_{v})= \frac{AB}{N^{2}}\sum_{i=0}^{+\infty}\left(\frac{A^{2i} + B^{2i}}{N^{2i}}\right)$$
Let $A=\alpha N$ with $\alpha \in (0,1)$. We can rewrite $Pr(E_{v})$ as follows:
\begin{align*}
    Pr(E_{v})&=\frac{\alpha(1-\alpha) N^{2}}{N^{2}} \sum_{i=0}^{+\infty}\frac{\left(\alpha N\right)^{2i}+(1-\alpha)^{2i}N^{2i}}{N^{2}}\\\\
    &= \alpha (1-\alpha) \sum_{i=0}^{+\infty} \alpha^{2i} + (1-\alpha)^{2i}\\\\
    &= \alpha (1-\alpha) \frac{-2\alpha^2 + 2\alpha +1}{(\alpha - 2)\alpha(\alpha^2-1)}\\\\
    &= \frac{2\alpha^2 -2\alpha -1}{(\alpha +1)(\alpha -2)}
\end{align*}
Let $f(\alpha) = \frac{2\alpha^2 -2\alpha -1}{(\alpha +1)(\alpha -2)}$. We have that $f[(0,1)] \in (\frac{1}{2},\frac{2}{3})$ (figure \ref{fig:biased}). In conclusion, the worst case bias of the bit extracted through Process \ref{alg:proc-1}  is $\frac{2}{3}$.
\begin{figure}[h]
        \includegraphics[scale=0.6]{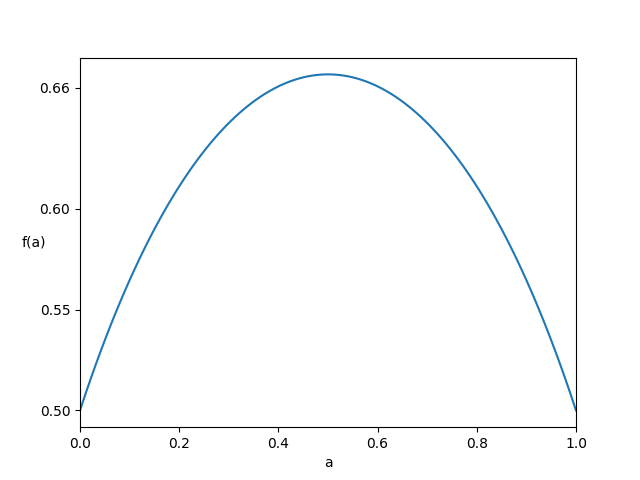}
        \centering
        \caption{Plot of $f(a)$}\label{fig:biased}
      \end{figure}
\end{proof}

Fung et al. \cite{fung2014improved} give barely random algorithms for some weighted variations of interval selection in the real-time model. Their $2$-competitive algorithm for the case of single-length intervals with arbitrary weights is also directly applicable to the any-order model, and consequently to the random-order model, while maintaining the same competitiveness. The real line is divided into slots, and each interval can be viewed as belonging to an \textit{even} or \textit{odd} slot. Their algorithm uses one random bit to pick one slot type, and gets an optimal solution amongst those intervals. We refer to \cite{fung2014improved} for a complete description of the algorithm. We can derandomize this algorithm using a random bit extracted by Process \ref{alg:proc-1} as follows: Our algorithm starts working on the first type of intervals that arrive as if it was the chosen one. When a new (slot-)type arrives, our bit is extracted, and we decide whether we will switch to the second type intervals or not. With a slight change in their proof (Theorem 3.1 in \cite{fung2014improved}) we see that our algorithm is $3$-competitive.\\\\

We note that revoking is essential in the above algorithm. Although the $2$-competitive algorithm by Fung et al. already requires that revoking is allowed in the model, we also need to be able to discard the entire solution constructed by the time the random bit is extracted. Process \ref{alg:proc-1} may be used more generally to derandomize algorithms that fall under the classify and randomly select paradigm, when two classes are used. Consider a deterministic algorithm $A$, and let $ALG_1$ (resp. $ALG_2$) denote the performance of the algorithm on input items that belong to class 1 (resp. 2). Assuming $ALG_1 +ALG_2 \geq \frac{OPT}{c}$, with $c\geq1$, we get a $3c$-competitive algorithm.\\\\

In all of our processes and applications we naturally assume that input items are represented by vectors ${\bf x} = (x_1, \ldots , x_d)  \in {\mathbb R}^d$ for some $d \geq 1$. This allows us to strictly order distinct (i.e. not identical) items. That is, ${\bf x} < {\bf y}$ iff $x_i = y_i$ for $i < k$ and $x_k < y_k$ for some $k$ with $1 \leq k \leq d$. 
Under the assumption that all input items are distinct (and hence there exists a global ordering),  we are able to extract an unbiased bit simply by comparing the two first items to arrive (Process \ref{alg:proc-2}). Note that while this is generally a strong assumption, we can envision applications where this may be reasonable to assume (e.g. interval selection with distinct intervals). Notice that we can repeat this Process \ref{alg:proc-2} for the next pair of online items. Generally, given $2N$ online items we can extract $N$ unbiased bits.


\begin{process}
\caption{Distinct-Unbiased}\label{alg:proc-2}
\begin{algorithmic}
\State On the arrival of $I_1,I_2$:

\If{$I_1 < I_2$}
    \State return(1)
\Else
    \State return(0)
\EndIf

\end{algorithmic}
\end{process}
\begin{theorem}
    Under the assumption that all input items are distinct (and hence there exists a global ordering amongst all items in the input instance), process \ref{alg:proc-2} (Distinct-Unbiased) produces an unbiased bit.
\end{theorem}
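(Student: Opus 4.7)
The plan is to exploit the symmetry of the uniformly random arrival order together with the distinctness assumption. Under the random-order model, the arrival sequence is a uniformly random permutation of the input items, so the ordered pair $(I_1, I_2)$ consisting of the first two items to arrive is uniformly distributed over all ordered pairs of distinct items from the instance.

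First, I would fix the unordered pair of items that appear in the first two positions, call it $\{a,b\}$ with $a \ne b$. Conditioned on this, the random permutation places $(a,b)$ in the first two slots and $(b,a)$ in the first two slots with equal probability (each with probability $\tfrac{1}{(n-2)!} / \tfrac{2}{(n-2)!} = 1/2$, since the remaining $n-2$ items are equidistributed in either case). By the distinctness hypothesis, exactly one of $a < b$ or $b < a$ holds under the global ordering on $\mathbb{R}^d$, so exactly one of these two orderings of the pair makes the process return $1$. Therefore
\begin{equation*}
\Pr[\text{Process returns } 1 \mid \{I_1,I_2\} = \{a,b\}] = \tfrac{1}{2}.
\end{equation*}

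Second, I would remove the conditioning by the law of total probability: summing the conditional probability $\tfrac{1}{2}$ over the distribution on the unordered pair $\{I_1,I_2\}$ yields $\Pr[\text{Process returns } 1] = \tfrac{1}{2}$, as required.

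There is essentially no technical obstacle here; the only point that deserves emphasis is the role of the distinctness assumption, which guarantees that the event $I_1 < I_2$ and its complement $I_2 < I_1$ partition the sample space (no ties), so that the symmetry argument yields an exactly unbiased bit rather than a biased one. I would also briefly remark, as already noted in the text preceding the theorem, that the same argument applied to disjoint consecutive pairs $(I_{2j-1},I_{2j})$ produces $N$ mutually independent unbiased bits from $2N$ arrivals, since the random permutation induces independent uniform orderings on disjoint pairs once the multiset partition into pairs is fixed.
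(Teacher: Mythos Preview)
Your argument is correct. It differs from the paper's proof in the choice of conditioning: the paper conditions on the identity of the \emph{first} arriving item $I_a$ and then sums $\Pr(I_b > I_a \mid I_a = I_i)\Pr(I_a = I_i) = \tfrac{1}{N}\sum_{i=1}^{N}\tfrac{N-i}{N-1} = \tfrac{1}{2}$, whereas you condition on the unordered pair $\{I_1,I_2\}$ and invoke the swap symmetry of the uniform permutation. Your route is slightly cleaner---it avoids any summation and makes the role of distinctness (no ties, so the swap genuinely flips the output) completely transparent---while the paper's computation is more explicit but arrives at the same $\tfrac12$ by arithmetic. Your closing remark on independence of the bits from disjoint pairs goes a step beyond what the paper states (which only asserts that $N$ unbiased bits can be extracted from $2N$ arrivals) and is also correct, by the same swap-symmetry argument applied independently within each pair.
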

\begin{proof}
    Let $I_1, I_2, ..., I_N$ be all the items in the instance, such that $I_1 < I_2 <...<I_N$. Let $I_a, I_b$ be the first two items that arrive. Let $E_1$ denote the event that $I_a < I_b$. Let $F_i$ denote the event that item $I_i$ arrives first, and $S_i$ denote the event that $I_{b} > I_{i}$. We have that:
    \begin{align*}
        Pr(E_1)&=Pr(S_1 | F_1)\cdot Pr(F_1)+ \dots + Pr(S_N | F_N)\cdot Pr(F_N)\\\\
        &=\frac{1}{N}\frac{N-1}{N-1} + \dots + \frac{1}{N}\frac{N-N}{N-1}\\\\
        & = \frac{1}{N}\sum_{i=1}^{N}\frac{N-i}{N-1} = \frac{1}{2}
    \end{align*}
\end{proof}


Motivated by the previous two extraction processes, we arrive at an improved $1$-bit extraction process. Under the mild assumption that there exist at least two distinct items, and we have an ordering amongst different items, we propose the following process: If the first two input items are identical, we use Process \ref{alg:proc-1}, otherwise we use the ordering of the types of the first two items to decide on the bit (return $1$ if the second type is less than the first, else $0$). We call this process \texttt{COMBINE}. The item types used by Process \ref{alg:proc-1} are defined by the first item that arrives. Items that are identical to the first item are of the same type, while any distinct item is of a different type. We will show the worst-case bias of the returned bit is $2-\sqrt{2} \approx 0.585$. Let $\mathcal{I}$ be the set of input items, and $\sigma: \mathcal{I} \xrightarrow[]{} \mathbb{N}$ be the function dictating the ordering of different items. Furthermore, let $\mathbb{D}$ be the event the first two input items are distinct, and $\mathbb{I}$ the event the first two items are identical. We start by proving the following lemma:
\begin{lemma} \label{lem:half-distinct}
    $Pr(\sigma(I_2) < \sigma(I_1) \;|\; \mathbb{D}) = \frac{1}{2}$, where $I_1,I_2$ are the first two items to arrive online.
\end{lemma}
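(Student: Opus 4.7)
The plan is to establish the claim via a symmetry (swap) argument on the uniform distribution over permutations of the input. The underlying randomness is a uniformly random permutation $\pi$ of $\mathcal{I}$, with $I_1 = \pi(1)$ and $I_2 = \pi(2)$. I would define the involution $\tau$ on the set of permutations of $\mathcal{I}$ that swaps the first two positions and fixes all others. Since $\tau$ is a bijection from permutations to permutations, it preserves the uniform measure.

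Next I would observe that $\tau$ preserves the event $\mathbb{D}$: whether the first two items are distinct (i.e., of different types) depends only on the unordered pair $\{\pi(1), \pi(2)\}$, which $\tau$ leaves invariant. Consequently, $\tau$ restricts to a measure-preserving bijection on the conditional distribution given $\mathbb{D}$. On the other hand, $\tau$ exchanges the events $A := \{\sigma(I_2) < \sigma(I_1)\}$ and $B := \{\sigma(I_1) < \sigma(I_2)\}$, since it swaps the roles of $I_1$ and $I_2$.

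On the event $\mathbb{D}$ we have $\sigma(I_1) \neq \sigma(I_2)$ (distinct items have different $\sigma$-values), so $A$ and $B$ partition $\mathbb{D}$. The measure-preserving bijection $\tau$ gives $Pr(A \mid \mathbb{D}) = Pr(B \mid \mathbb{D})$, and combining with $Pr(A \mid \mathbb{D}) + Pr(B \mid \mathbb{D}) = 1$ yields $Pr(\sigma(I_2) < \sigma(I_1) \mid \mathbb{D}) = \tfrac{1}{2}$, as desired.

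There is essentially no obstacle here — the statement is a clean consequence of the symmetry between the first two positions in a uniformly random permutation. The only subtlety worth flagging (to avoid confusion later when the lemma is used inside \texttt{COMBINE}) is that the event $\mathbb{D}$ is determined by $\{\pi(1),\pi(2)\}$ alone, so the swap indeed fixes $\mathbb{D}$; this makes the symmetry argument apply without needing to track the distribution of item types at all.
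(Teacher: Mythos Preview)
Your symmetry argument is correct and is genuinely different from the paper's approach. The paper instead works under the large-$N$ (infinite population) convention: it writes
\[
Pr(\sigma(I_2) < \sigma(I_1)) = \sum_{i=1}^{d} \sum_{j=1}^{i-1} \frac{C_i}{N}\cdot\frac{C_j}{N},
\qquad
Pr(\mathbb{D}) = \sum_{i=1}^{d} \sum_{j\neq i} \frac{C_i}{N}\cdot\frac{C_j}{N},
\]
observes that every unordered pair $(i,j)$ with $i\neq j$ contributes twice to the second sum and once to the first, so $Pr(\mathbb{D}) = 2\,Pr(\sigma(I_2) < \sigma(I_1))$, and divides. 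Your involution $\tau$ that swaps the first two positions bypasses all of this: it is exact for every finite $N$ (no sampling-with-replacement approximation), requires no bookkeeping of type frequencies, and makes transparent that the only fact being used is the exchangeability of positions $1$ and $2$ under the uniform permutation. The paper's computation, by contrast, makes the dependence on the type counts $C_1,\ldots,C_d$ explicit before it cancels, which is consistent with how the surrounding analysis of \texttt{COMBINE} is written but is not needed for this lemma in isolation.
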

\begin{proof}
    Let $d \leq N$ be the number of distinct items (or types), and let $C_1, C_2,...,C_d$ be the number of copies of each type, with $\sigma(type_i) < \sigma(type_j)$ when $i<j$. As before, we assume $N$ is very large and we are sampling from an infinite population. We have that:
    $$Pr(\sigma(I_2) < \sigma(I_1)) = \sum_{i=1}^{d} \sum_{j = 1}^{i-1}\frac{C_i}{N}\frac{C_j}{N}$$
    Furthermore:
    $$Pr(\mathbb{D}) = \sum_{i = 1}^{d} \sum_{\substack{j = 1\\ j\neq i}}^{d}\frac{C_i}{N}\frac{C_j}{N} = 2 \sum_{i = 1}^{d} \sum_{\substack{j = 1}}^{i-1}\frac{C_i}{N}\frac{C_j}{N} = 2 \cdot Pr(\sigma(I_2) < \sigma(I_1))$$
    This is because every pair $(i,j), i\neq j$ appears twice in the summation of $Pr(\mathbb{D})$. In conclusion:
    $$Pr(\sigma(I_2) < \sigma(I_1) \;|\; \mathbb{D}) = \frac{Pr(\sigma(I_2) < \sigma(I_1))}{Pr(\mathbb{D})} = \frac{1}{2}$$
\end{proof}
We will now prove the following theorem.
\begin{theorem}
    The bit returned by Process \texttt{COMBINE} has a worst-case bias of $2-\sqrt{2}$. 
\end{theorem}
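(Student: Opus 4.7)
The plan is to derive a closed-form expression for $Pr(\texttt{COMBINE}=1)$ as a function of the type distribution and bound its distance from $1/2$ by $3/2 - \sqrt{2}$. I will condition on the events $\mathbb{D}$ (first two items distinct) and $\mathbb{I}$ (first two items identical). The $\mathbb{D}$ branch is immediate from Lemma~\ref{lem:half-distinct}, which gives $Pr(\text{bit}=1 \mid \mathbb{D}) = 1/2$, so it contributes zero bias, and all the work lies in the $\mathbb{I}$ branch where Process~\ref{alg:proc-1} is invoked.

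Using the same infinite-population convention as in the analysis of Process~\ref{alg:proc-1}, I will extend its geometric-series argument to an arbitrary number of item types. Let $p_i$ denote the fraction of items of type $i$, so that $P_I := Pr(\mathbb{I}) = \sum_i p_i^2$. Conditioning further on the first arrival being type $i$ (which, under $\mathbb{I}$, forces positions $1$ and $2$ to both be type $i$), the returned bit is $1$ iff the first non-type-$i$ item appears at an even position $\geq 4$; the same geometric sum that produced $Pr(B_e \mid F_A)$ in the proof of Process~\ref{alg:proc-1} now evaluates to $p_i/(1+p_i)$. Marginalizing over the type of the first item (with conditional probability $p_i^2/P_I$ under $\mathbb{I}$) gives $Pr(\text{bit}=1, \mathbb{I}) = \sum_i p_i^3/(1+p_i)$, and hence $Pr(\text{bit}=1) = (1-P_I)/2 + \sum_i p_i^3/(1+p_i)$.

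The key algebraic step is to rewrite this as $Pr(\text{bit}=1) = 1/2 - \tfrac{1}{2}\sum_i p_i\,\psi(p_i)$, where $\psi(p) := p(1-p)/(1+p)$. This follows from the termwise identity $p_i^3/(1+p_i) - p_i^2/2 = -\,p_i\,\psi(p_i)/2$, combined with $(1-P_I)/2 + P_I/2 = 1/2$. Because $\psi \geq 0$ on $[0,1]$, we always have $Pr(\text{bit}=1) \leq 1/2$, so the biased outcome is $\text{bit}=0$, with probability $1/2 + \tfrac{1}{2}\sum_i p_i\,\psi(p_i)$.

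The punchline is that $\sum_i p_i\,\psi(p_i)$ is a convex combination of $\{\psi(p_i)\}$ (the weights $p_i$ summing to $1$), so it is bounded above by $\max_{p \in [0,1]} \psi(p)$. Setting $\psi'(p) = 0$ reduces to $p^2 + 2p - 1 = 0$, yielding the interior maximizer $p^\star = \sqrt{2} - 1$, and a direct simplification gives $\psi(p^\star) = 3 - 2\sqrt{2}$. Therefore the bias is at most $1/2 + (3 - 2\sqrt{2})/2 = 2 - \sqrt{2}$, as claimed. The main subtlety I anticipate is the algebraic rearrangement that exposes the weighted-average structure; once the bias takes the form $\tfrac{1}{2}\sum_i p_i\,\psi(p_i)$, what appears to be a multi-variable optimization over the distribution $\{p_i\}$ collapses to a clean one-variable maximization of $\psi$ on $[0,1]$.
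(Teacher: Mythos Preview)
Your proof is correct and follows essentially the same route as the paper: both reduce the bias to the one-variable problem of maximizing $\psi(r)=r(1-r)/(1+r)$, locate the maximizer at $r=\sqrt{2}-1$, and read off $\psi(r^\star)=3-2\sqrt{2}$ and bias $2-\sqrt{2}$. The only difference is packaging: the paper parametrizes by the frequency $r$ of the first-arriving item and optimizes $\tfrac{1}{2}(1-r)+\tfrac{r}{1+r}=\tfrac{1}{2}+\tfrac{1}{2}\psi(r)$ directly over $r$, whereas you retain the full type distribution $(p_i)$, obtain the bias as $\tfrac{1}{2}\sum_i p_i\,\psi(p_i)$, and then make explicit via the convex-combination bound why the multi-parameter problem collapses to $\max_p \psi(p)$. (A cosmetic discrepancy: your geometric series gives $p_i/(1+p_i)$ for the even-position event, so you identify bit $=0$ as the biased outcome; the paper's series $1/(1+r)$ sums the odd-position terms but labels the resulting quantity $Pr(\text{bit}=1)$. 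Either way the bound on $\max\{Pr(\text{bit}=0),Pr(\text{bit}=1)\}$ is unaffected.)
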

\begin{proof}
    Let the first item to arrive online occur with frequency $r\in (0,1)$, i.e. there are $rN$ copies of that type in the input. Let $\mathbb{E}_2$ be the event that Process \ref{alg:proc-1} returns $1$. We get that:
    \begin{align*}
        Pr(bit = 1) &= Pr(bit = 1 \cap \mathbb{D}) + Pr(bit = 1 \cap \mathbb{I})\\
        &= Pr(\sigma(I_2) < \sigma(I_1) \;|\; \mathbb{D}) Pr(\mathbb{D}) + Pr(\mathbb{E}_2 \; | \; \mathbb{I}) Pr(\mathbb{I})\\
        &= \frac{1}{2}(1-r) + r \cdot Pr(\mathbb{E}_2 \; | \; \mathbb{I}) 
    \end{align*}
    where the last equality follows from Lemma \ref{lem:half-distinct}.\\
    Furthermore:
    \begin{align*}
      Pr(\mathbb{E}_2 \; | \; \mathbb{I}) &= (1-r) + r^2(1-r) + r^4(1-r) +...\\
      &= (1-r)\sum_{i=0}^{\infty} r^{2i}\\
      &= (1-r)\frac{1}{1-r^2} = \frac{1}{1+r}
    \end{align*}
    Putting everything together:
    $$Pr(bit = 1) = \frac{1}{2}(1-r) + \frac{r}{1+r}$$
    which achieves a maximum value of $2-\sqrt{2}\approx 0.585$ at $r = \sqrt{2}-1 \approx 0.414$. More specifically, we have that $Pr(bit = 1) \in (\frac{1}{2},2-\sqrt{2})$, which concludes the proof.
\end{proof}

\subsection{Applications of the 1-bit extraction method \texttt{COMBINE}}
\textit{Single-length interval selection}. The algorithm of Fung et al. \cite{fung2014improved} for the problem of single-length interval selection with arbitrary weights that we derandomized using Process \ref{alg:proc-1} can also be derandomized using \texttt{COMBINE}, resulting in an $\frac{1}{\sqrt{2}-1} \approx 2.41$-competitive algorithm.  Distinct intervals would be compared based on their starting points, and the random bit would always be extracted before the arrival of an interval belonging to a different slot (compared to the very first one). \\\\
\textit{General knapsack.} Han et al. \cite{han2015randomized} study the online (with revoking) knapsack problem in the adversarial setting and give a $2$-competitive randomized algorithm that initially chooses (uniformly at random)  between two different deterministic algorithms (one that greedily maintains the best items in terms of value/size and one that is obtaining the best items in terms of value). A key point in derandomizing the algorithm is that as long as we are getting copies of the same (first) item, the two algorithms behave identically, and when a different item arrives, we will have extracted the random bit. We compare different items using the global ordering amongst distinct items (e.g. first by value and then by size). Similar to the proof for single-length interval selection, we get a $2.41$-competitive deterministic online (with revoking) algorithm for the knapsack problem  under random order arrivals.\\\\
\textit{String guessing.} In the binary  string guessing problem (see B{\"o}ckenhauer et al. \cite{bockenhauer2014string}), the algorithm has to guess an $n$-bit string, one bit at a time. While no deterministic algorithm can achieve a constant competitive ratio,  the simple randomized algorithm of  guessing the majority bit is $2$-competitive. We can derandomize this algorithm under random arrivals as follows. We start by deterministically predicting the first bit $b$ arbitrarily. Let $b'$ be the true value of the first bit (revealed after our prediction). If $b = b'$,  We keep guessing $b$ until we make a wrong prediction. At that point \texttt{COMBINE} returns a bit $r$, and we keep guessing $r$ for the remainder of the sequence. Given that \texttt{COMBINE} chooses the majority bit  with probability at least $\sqrt{2}-1$, we get a $2.41$-competitive algorithm.\\\\
\textit{Scheduling.} Albers \cite{Albers02} provides a $1.916$-competitive barely random algorithm for the makespan problem on identical machines. The algorithm initially chooses between one of two scheduling strategies, which differ in the fraction of machines they try to keep ``lightly'' and ``heavily'' loaded. It is also noteworthy that the chosen strategy keeps track of how the other strategy would have scheduled jobs so far. Because these deterministic strategies could behave differently on a stream of identical inputs, derandomizing this algorithm under random arrivals may incur an additional (beyond the bias) constant penalty to the competitive ratio.\\\\
\textit{Job throughput scheduling.} In job throughput scheduling, we usually  assume the real-time model in which jobs arrive in real time. Although it seems inconsistent to simultaneously consider real time arrivals and random order arrivals, there is a meaningful sense in which we can consider {\it real time jobs in the random order model}.  In particular, for throughput scheduling problems,  a job $J_i$ is represented as $(r_i,p_i,d_i,w_i)$ where $r_i$ is the start, arrival, or release time, $p_i$ is the processing time, $d_i$ is the deadline, and $w_i$ is the weight of the $i^{th}$ job $J_i$. The objective is to maximize the weight of a set of jobs in a feasible schedule in which  $J_i$ (if scheduled) is scheduled at some time $t_i$ such that $r_i \leq t_i, t_i + p_i \leq d_i$ and scheduled jobs do not intersect \footnote{As in interval selection, it doesn't matter if we let a job start exactly or just after another job has ended.}. One may consider a modified model of random arrivals for job scheduling as follows. The starting points of jobs are fixed, but the remaining attributes of a job (e.g. duration, deadline, weight) are randomly permuted. That is, if $J_i = (r_i,p_i,d_i,w_i)$, then the input sequence is 
$(r_1,p_{\pi(1)},d_{\pi(1)},w_{\pi(1)}), (r_2,p_{\pi(2)},d_{\pi(2)},w_{\pi(2)}), \ldots, (r_n,p_{\pi(n)},d_{\pi(n)},w_{\pi(n)})$ for a random permutation $\pi$. Note that we can equivalently represent a job as $r_i, p_i, e_i, w_i$ where $e_i = d_i - (r_i+p_i) \geq 0$  is the excess {\it slack} for scheduling the job.\\\\
Kalyanasundaram and Pruhs \cite{kalyanasundaram2003maximizing} consider a single processor setting and give a barely random algorithm with constant (although impractically large) competitiveness for the unweighted real time throughput problem. Baruah et al. \cite{baruah1994line} show that no deterministic real-time algorithm with preemption can achieve constant competitiveness. In \cite{kalyanasundaram2003maximizing},  it is shown that $OPT$ is bounded  by a linear combination of the performance of two deterministic algorithms.  We note once again that as long as the jobs arriving online are identical, both algorithms behave exactly the same. Whenever a new job arrives, the random bit will be extracted and an algorithm will be chosen. Distinct jobs can be compared based on their slack and processing time. Chrobak et al. \cite{chrobak2007online} consider the more special case of unweighted throughput problem where all jobs have the same processing time. Their $1$-bit barely random algorithm uses two copies of the same deterministic algorithm that share a lock, leading to asymmetrical behavior and achieving a competitive ratio $\frac{5}{3}$. As in Albers \cite{Albers02}, we have to simulate both algorithms regardless of which one we choose to follow. Because the two deterministic algorithms can schedule identical jobs at different times, we may incur an additional constant penalty to the competitive ratio when derandomizing under random arrivals.\\\\
\textit{Interval selection with C,D-benevolent instances.} In the context of interval selection, an instance is C-benevolent if the weight of an interval is a convex increasing function of its length, and D-benevolent if it is decreasing. Fung et al. \cite{fung2014improved} give $1$-bit barely random algorithms that are $2$-competitive for both C-benevolent and D-benevolent instances in the real-time model. These algorithms follow the logic for the case of single-length arbitrary weights, but are a bit more involved with the slots on the line being redefined adaptively throughout the execution of the algorithm. We refer the reader to \cite{fung2014improved} for a complete description of these algorithms. Although for single-length arbitrary weights we considered the conventional random ordering of all items, for the cases of C-benevolent and D-benevolent instances, we are able to derandomzize their algorithms under the \textit{real time with random order} model and get $2.41$-competitive algorithms.\\\\

\subsection{Going Beyond 1 bit of Randomness}

While our extraction processes are so far limited to one bit extraction, there are some ideas that can be used for extracting some small number of random bits depending on the application and what we can assume. In the any-order model, we know from \cite{borodin2023any} that we can extend the Fung et al. algorithm for single length instances to obtain a classify and randomly select $2k$-competitive algorithm for instances with at most $k$ different length intervals and arbitrary weights. Furthermore, $k$ does not have to be known initially, and we can randomly choose a new length whenever it first appears. In the random-order model, we can derandomize that algorithm for the special case of two-length interval selection with arbitrary weights, under the additional assumption that the input sequence consists of distinct intervals. This way we are able to utilize both Process \ref{alg:proc-1} and \ref{alg:proc-2} and get a $6$-competitive deterministic algorithm. The algorithm would use the unbiased bit from the first two intervals to decide on the length. While working on any length, we would use Process \ref{alg:proc-1} to decide on the slot type.\\\\
The above algorithm would not necessarily work in the case of three lengths, because if the first three intervals are of different length, we would not have extracted enough random bits to simulate the $\frac{1}{3}$ coin required for the third length. It would be tempting to just use the relative order of the second and third items but that bit is correlated to some extent with the first extracted bit and this introduces further difficulties. One future direction would be to study problems under the assumption that there are significant gaps in between random choices that would allow us to extract a sufficient number of bits. Another promising direction is to quantify the amount of correlation when we use consecutive input items.

\section{Conclusions}
In contrast to the optimal $2k$ competitive ratio for Algorithm \ref{alg:replace-sub} with adversarial input sequences, we have shown that the same algorithm achieves an upper bound of $2.5$ under  random order  arrivals. We have also given a lower bound of $2$ (as $n\rightarrow +\infty$) for that algorithm. We believe a matching upper bound might be possible with a more careful analysis of direct charging. It is also plausible that a deterministic algorithm can be better than $2$-competitive. A better algorithm might have additional replacement rules, in particular for partial conflicts. We also want to improve the $\frac{12}{11}$ lower bound for random arrivals as we believe  that this is not the optimal bound. The reason to present this easy lower bound is to demonstrate the provable difference between real time and random order arrivals.\\\\
 Our study has thus far only considered deterministic algorithms and an obvious question is to consider randomized algorithms for interval selection with revoking in the random order model. More generally, we ask if it is possible to obtain a deterministic or randomized constant competitive ratio for interval selection and arbitrary weights with random order arrivals and revoking. We conjecture that this is not possible.  One possibility is to try to adapt the Canetti and Irani \cite{chrobak2006note} randomized lower bound with revoking for interval selection with arbitrary weights. Alternatively, for what other classes of weight functions (beyond C-Benevolent and D-Benevolent) can we achieve a constant ratio?\\\\
We studied three processes for extracting random bits from uniformly random arrivals, and showed direct applications in derandomizing algorithms for a variety of problems. So far, our results are mainly limited to extracting a single random bit. We considered interval selection (with arbitrary weights) when there are at most two interval lengths assuming that all intervals in the input sequence are distinct.  With that assumption, we  were able to extract two bits so as to achieve a constant competitive ratio. Can we obtain a constant ratio without this strong assumption? Furthermore, can we achieve a constant ratio by extracting random bits  (in terms of $k$)  for when there are at most $k > 2$ different interval lengths when assuming that all input items are different? Are there other applications which only gradually need random bits and not all the random  bits initially as is usually done in most classify and randomly select algorithms. Although the Fung et al. algorithm for single length instances can be viewed as a classify and randomly select algorithm, we know from \cite{borodin2023any} that random bits do not have to be initially known for instances with a limited number of interval lengths.\\\\
We conclude that we have a very limited understanding of the power of the random arrival model with revoking.  We repeat a question that motivated our interest in online randomness extraction. Namely, is there a (natural or contrived) problem for which we can provably show a (significantly)  better randomized competitive algorithm with adversarial order than what is achievable by deterministic algorithms with random order arrivals?

\printbibliography
\appendix
\section{Dealing with the cases of $|S|<3$}\label{app:A}
\subsection{In Lemma \ref{lemm:k_2_main}.}
Consider the case of $|S| = 1$, with $S=\{S_{1}\}$. As per the analysis of Lemma \ref{lemm:k_2_main}, we get that $\E[TC(S_{1})] \leq 1$. If $S_{1}$ is an optimal interval, it means that it is only directly charged when it is accepted, and we have that $DC(S_{1}) = 1$. Therefore, $\E[\Phi(S_{1})]\leq 2$. If $S_{1}$ is not an optimal interval (because it conflicts with the left and/or right optimal interval), we have that $\Phi(S_{1}) \leq 2$ just from the fact that there exist two optimal intervals in total.\\\\
We now consider the case of $|S| = 2$, with $S=\{S_{1},S_{2}\}$. We have that for $I\in \{S_{1},S_{2}\}$, $\E[TC(I)]\leq \frac{2}{3}$. If both $S_{1},S_{2}$ are optimal intervals, we have that $DC(S_{1}) = DC(S_{2}) = 1$, and therefore for $I\in \{S_{1},S_{2}\}$, $\E[\Phi(I)]\leq \frac{2}{3} + 1 < 2.5$. If $S_{1},S_{2}$ are both non-optimal intervals, we have that $\Phi(S_{i}) \leq 2$ for all $i$, because there exist at most two optimal intervals in total. Consider now the case of $S_{1}$ being an optimal interval, and $S_{2}$ being a non-optimal interval. We consider two subsequent cases:\\\\
\textit{Case 1:} $S_{1}$ and $S_{2}$ don't overlap. In this case we have that $DC(S_{1}) = 1$, and $DC(S_{2})\leq 1$, since $S_{2}$ must be in conflict with either the left or the right optimal interval. Therefore we get $\E[\Phi(I)] < 2.5$ for $I\in \{S_{1},S_{2}\}$.\\\\
\textit{Case 2:} $S_{1}$ and $S_{2}$ overlap. In this case, we have that $DC(S_{1}) = 1$, and $1 \leq DC(S_{2}) \leq 2$. Let $I\in \{S_{1},S_{2}\}$ be the interval that makes it into the final solution. We have that $\E[DC(I)]\leq \frac{1}{2}\cdot 1 + \frac{1}{2} \cdot 2 = 1.5$, and $\E[\Phi(I)]\leq \frac{2}{3} + 1.5 < 2.5$.\\\\
In conclusion, Lemma \ref{lemm:k_2_main} holds even in the case of $|S| < 3$.
\subsection{In Lemma \ref{lemm:big_k_main}.}
The induction argument in Lemma \ref{lemm:big_k_main} goes through with $|S'_d| = 1$ and $|S'_d| = 2$, giving us a bound on the expected amount of transfer charge of $1$ and $\frac{2}{3}$ respectively. As in A.1, in the case of $|S'_d| = 1$ with $S_1$ being an optimal interval, we have that $DC(S_1) = 1$. If $S_1$ is not an optimal interval and conflicts with one optimal interval, we have that $DC(S_1)\leq 1$. If $S_1$ conflicts with two optimal intervals, we have that $DC(S_1)= 2$ with probability at most $\frac{1}{3}$. Therefore $\E[DC(S_1)]\leq \frac{2}{3} + 2\cdot \frac{1}{3} = \frac{4}{3}$. In all cases, we have that $\E[TC(S_1)] + \E[DC(S_1)] \leq 2.5$.\\\\
In the case of $S'_d = \{S_1, S_2\}$ with both intervals being optimal, as in A.1, we get that $DC(S_1) = DC(S_2) = 1$. If neither interval is optimal, similar to the argument in the case of $|S'_d| = 1$, we have that $DC(S_i) = 2$ with probability at most $\frac{1}{2}$, and $\E[DC(S_i)]\leq \frac{1}{2} + 2 \cdot \frac{1}{2}$. Therefore, $\E[\Phi(S_i)] \leq 2.5$. Finally, the case of one of the two intervals being optimal is handled like in A.1.
\section{A lower bound under random arrivals.}\label{app:B}
We will show a lower bound of $\frac{12}{11}$ for all deterministic algorithms with revoking in the random-order model. This is in contrast to the real-time model with revoking, where $1$-competitiveness is attainable. Let $LB_1$ be a three interval instance as shown in figure \ref{fig:LB-1}.

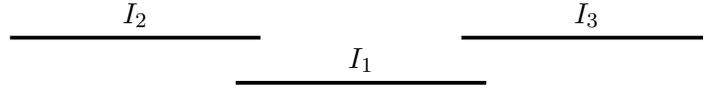
\begin{figure}[H]
	\centering
	
	\begin{tikzpicture}[scale=0.6]

	\node[draw=none] (I2a) at (-18,1) {$ $};
	\node[draw=none] (I2b) at (-12,1) {$ $};
	\node at (-15,1.5) {$I_{2}$};
	\draw[line width=0.5mm] (I2a) -- (I2b);

    \node[draw=none] (I3a) at (-13,0) {$ $};
	\node[draw=none] (I3b) at (-7,0) {$ $};
 \node at (-10,0.5) {$I_{1}$};
	\draw[line width=0.5mm] (I3a) -- (I3b);

 \node[draw=none] (I4a) at (-8,1) {$ $};
	\node[draw=none] (I4b) at (-2,1) {$ $};
	\node at (-5,1.5) {$I_{3}$};
	\draw[line width=0.5mm] (I4a) -- (I4b);

	\end{tikzpicture} 
	\caption{Instance $LB_1$.}\label{fig:LB-1}
\end{figure}
First, notice that because the algorithm has no knowledge of the size of the input, it must act greedily on the first interval to arrive. If that weren't the case, we could introduce a one-interval instance where the competitiveness of the algorithm would be unbounded. Consider now the behaviour of the algorithm if $I_1$ was to arrive first.\\\\
\textit{Case 1:} The algorithm will not replace $I_1$ with either interval that might arrive second. In this case, we know that with probability at least $\frac{1}{3}$, the algorithm will have one interval in its solution.\\\\
\textit{Case 2:} There is at least one interval in $\{I_2,I_3\}$ such that if it is the second interval to arrive, it will replace $I_1$. Let $I_3$ be such an interval. We  can then use instance $LB_2$ (fig. \ref{fig:LB-2}), with intervals $I_1$ and $I_3$ being the same as in $LB_1$. In this case we know that with probability at least $\frac{1}{6}$, the algorithm will have one interval in its solution.
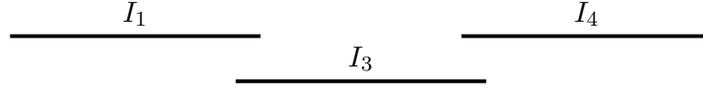
\begin{figure}[H]
	\centering
	
	\begin{tikzpicture}[scale=0.6]

	\node[draw=none] (I2a) at (-18,1) {$ $};
	\node[draw=none] (I2b) at (-12,1) {$ $};
	\node at (-15,1.5) {$I_{1}$};
	\draw[line width=0.5mm] (I2a) -- (I2b);

    \node[draw=none] (I3a) at (-13,0) {$ $};
	\node[draw=none] (I3b) at (-7,0) {$ $};
 \node at (-10,0.5) {$I_{3}$};
	\draw[line width=0.5mm] (I3a) -- (I3b);

 \node[draw=none] (I4a) at (-8,1) {$ $};
	\node[draw=none] (I4b) at (-2,1) {$ $};
	\node at (-5,1.5) {$I_{4}$};
	\draw[line width=0.5mm] (I4a) -- (I4b);

	\end{tikzpicture} 
	\caption{Instance $LB_2$.}\label{fig:LB-2}
\end{figure}
In conclusion, there is always an instance with $OPT=2$, and $\E[ALG]\leq \frac{1}{6}+2\cdot\frac{5}{6}=\frac{11}{6}$, and therefore the competitive ratio is at least $\frac{12}{11}$. Even under the assumption that the algorithm knows the size of the input, the same bound holds. If $I_1$ was not taken upon arrival, using instance $LB_2$ we know that the algorithm will only have one interval in its solution with probability at least $\frac{1}{3}$.

\end{document}